\documentclass[11pt]{article}
\setlength{\voffset}{-.75truein}
\setlength{\textheight}{9truein}
\setlength{\textwidth}{7.3truein}
\setlength{\hoffset}{-0.9truein}

\usepackage{amsthm,amsfonts,amsmath,amssymb,amscd, accents, mathrsfs, mathtools}
\usepackage{authblk}
\usepackage{hyperref}
\usepackage{multicol}
                                \usepackage{verbatim}
                                \usepackage{graphicx}

\swapnumbers
                              

\pagestyle{myheadings}

\mathtoolsset{showonlyrefs=false}
                                
\theoremstyle{plain}

\usepackage{enumitem}
\numberwithin{equation}{section}
\newtheorem{theorem}{Theorem}[section]

\newtheorem{assumption}[theorem]{Assumption}
\newtheorem{lemma}[theorem]{Lemma}
\newtheorem{proposition}[theorem]{Proposition}

\newtheorem{definition}[theorem]{Definition}
\newtheorem{notation}[theorem]{Notation}

\theoremstyle{remark}

\numberwithin{equation}{section}

\newcommand{\bR}{{\mathbb R}}

\newcommand{\cH}{{\mathcal H}}

\newcommand{\cC}{{\mathcal C}}

\newcommand{\cI}{{\mathcal I}}

\newcommand{\cB}{{\mathcal B}}
\newcommand{\cE}{{\mathcal E}}

\newcommand{\ket}[1]{\left\vert #1\right\rangle}
\newcommand{\bra}[1]{\left\langle #1\right\vert}

%
\def\idty{{\mathchoice {\mathrm{1\mskip-4mu l}} {\mathrm{1\mskip-4mu l}} %
{\mathrm{1\mskip-4.5mu l}} {\mathrm{1\mskip-5mu l}}}}

\newcommand{\Tr}{\mathrm{Tr}}





\newcommand{\be}{\begin{equation}}
\newcommand{\ee}{\end{equation}}
\newcommand{\bea}{\begin{eqnarray}}
\newcommand{\eea}{\end{eqnarray}}
\newcommand{\beann}{\begin{eqnarray*}}
\newcommand{\eeann}{\end{eqnarray*}}




\usepackage{color}

\begin{document}

\title{Coherence as entropy increment for Tsallis and R\'enyi entropies}
\author{Anna Vershynina}
\affil{\small{Department of Mathematics, Philip Guthrie Hoffman Hall, University of Houston, 
3551 Cullen Blvd., Houston, TX 77204-3008, USA}}
\renewcommand\Authands{ and }
\renewcommand\Affilfont{\itshape\small}

\date{\today}

\maketitle

\begin{abstract} 
Relative entropy of coherence can be written as an entropy difference of the original state and the  incoherent state closest to it when measured by relative entropy. The natural question is, if we generalize this situation to Tsallis or R\'enyi entropies, would it define good coherence measures? In other words, we define a difference between Tsallis entropies of the original state and the incoherent state closest to it when measured by Tsallis relative entropy. Taking R\'enyi entropy instead of the Tsallis entropy, leads to the well-known distance-based R\'enyi coherence, which means this expression defined a good coherence measure. Interestingly, we show that Tsallis entropy does not generate even a genuine coherence monotone, unless it is under a very restrictive class of operations. Additionally, we provide continuity estimate for R\'enyi coherence. Furthermore, we present two coherence measures based on the closest incoherent state when measures by Tsallis or R\'enyi relative entropy.
\end{abstract}

\section{Introduction}
Quantum coherence describes the existence of quantum interference, and it is often used in thermodynamics \cite{A14, C15, L15}, transport theory \cite{RM09, WM13}, and quantum optics \cite{G63, SZ97}, among few applications. Recently, problems involving coherence included quantification of coherence \cite{BC14, PC16, RPL16, R16, SX15, YZ16}, distribution \cite{RPJ16}, entanglement \cite{CH16, SS15}, operational resource theory \cite{CG16, CH16, DBG15, WY16},  correlations \cite{HH18, MY16, TK16}, with only a few references mentioned in each. See \cite{SAP17} for a more detailed review.

The golden standard for any ``good" coherence measure is for it to satisfy four criteria presented in \cite{BC14}: vanishing on incoherent states; monotonicity under  incoherent operations; strong monotonicity under incoherent operations, and convexity. Alternatively, the last two properties can be substituted by an additivity for subspace independent states, which was shown in \cite{YZ16}. See Preliminaries for more details.

 A number of ways has been proposed as a coherence measure, but only a few satisfy all necessary criteria \cite{BC14, ZY18, Zetal17}. A broad class of coherence measures are defined as the minimal distance $D$ to the set of incoherent states $\cI$, as
 $$CD(\rho)=\min_{\delta\in \cI}D(\rho, \delta). $$
 Here ``distance" is understood in a rather broad term, more of a distinguishability measure. We discuss the properties it should satisfy in chapters below.
It was shown in \cite{BC14} that for a relative entropy there is a closed expression of a distance-based coherence:
\begin{equation}\label{eq:rel}
\min_{\delta\in \cI}S(\rho\|\delta)=S(\rho\|\Delta(\rho))=S(\Delta(\rho))-S(\rho)\ ,
\end{equation}
here $\Delta(\rho)$ is the dephased state in a pre-fixed basis, see Notation \ref{notation}.

Different set of incoherent operations generate other physically relevant coherence measures. The largest set one considers is the set of incoherent operations (IO)  \cite{BC14}, which have Kraus operators that each preserve the set of incoherent states (see Definition \ref{def:IO}). A smaller set is called genuine incoherent operations (GIO) \cite{DS16}, which act trivially on incoherent states, see Definition \ref{def:GIO}. See \cite{CG16-2} for a larger list of incoherent operations, and their comparison. For these types of incoherent operations one may look at similar properties as the ones presented in \cite{BC14}. Restricted to GIO, one would obtain a measure of genuine coherence when it is non-negative and monotone, or a coherence monotone when it is also strongly monotone under GIO.

Motivated by the last expression in (\ref{eq:rel}), similar expressions were considered in \cite{DH21} for Tsallis and R\'enyi entropies:
$$S^R_\alpha(\Delta(\rho))-S_\alpha^R(\rho)\ , $$
$$S_\alpha^T(\Delta(\rho))-S_\alpha^T(\rho)\ . $$
It was found that these expressions define genuine coherence monotones (definition will come later). They have advantage over distance-based measures by being the explicit expressions, easy to calculate. Moreover, they can be regarded as measurement-induced entropy increment related to the quantum thermodynamics \cite{KA16}.

In \cite{V22} the following generalized genuine coherence monotone was proposed:
$$\cC_f(\rho)=S_f(\Delta(\rho))-S_f(\rho)\ , $$
here $S_f(\rho)$ is a quasi entropy, which could be defined in two ways, one of which is $S_f(\rho)=-S_f(\rho\|| I)$. 

Here we show the operational meaning of this $f$-coherence, by showing that it is not possible to distill a higher coherence states from a lower coherence state via GIO, Theorem \ref{thm:distill}. To prove this result, we first show the continuity of $f$-coherence, Theorem \ref{thm:coh-cont}.  

If one looks at (\ref{eq:rel}) again, the last expression is the difference in entropies of the state $\rho$ and its {closest incoherent state} $\Delta(\rho)$, when measured by the relative entropy. So we ask a question, if we change the entropy and relative entropy in this expression to the Tsallis ones, would that generate a good coherence monotone/measure? Note that this change will change the closest incoherent state as well.
In other words, we investigate the properties of the following Tsallis coherence
$$CT_\alpha(\rho):=S_\alpha^T(\Delta_\alpha(\rho))-S^T_\alpha(\rho)\ , $$
here $\Delta_\alpha(\rho)$ is the closest incoherent state to $\rho$ when measured by Tsallis relative entropy, i.e.
$$S^T_\alpha(\rho\|\Delta_\alpha(\rho)):=\min_{\delta\in \mathcal{I}}S^T_\alpha(\rho\|\delta)\ . $$
The explicit form of $\Delta_\alpha$ is given in \cite{R16}, and it is the same for R\'enyi and Tsallis relative entropies. 

Surprisingly, taking R\'enyi entropies above leads to the well-known distance-based R\'enyi coherence:
$$CR_\alpha(\rho)=\min_{\delta\in\mathcal{I}}S_\alpha^R(\rho\|\delta)=S_\alpha^R(\rho\|\Delta_\alpha(\rho))=S^R_\alpha(\Delta_\alpha(\rho))-S^R_\alpha(\rho)\ . $$
We provide a continuity estimate for this R\'enyi coherence \ref{thm:cont-R}.

This means that the entropy increment for von Neumann entropy (with relative entropy) and R\'enyi entropy are good coherence measures, however, we show that a similar Tsallis entropy does not lead even to a good genuine coherence monotone. It is a coherence monotone under a very restrictive class of operations.

At the end, we propose two new coherence measures, inspired by the expression for the closest incoherent state when measured by the Tsallis or R\'enyi relative entropy.

\section{Preliminaries}
\subsection{Coherence}

Let $\cH$ be a $d$-dimensional Hilbert space. Let us fix an orthonormal basis $\cE=\{\ket{j}\}_{j=1}^d$ of vectors in $\cH$.
\begin{definition} A state $\delta$ is called {\it incoherent} if it can be represented as follows
$\delta=\sum_j \delta_j\ket{j}\bra{j}. $
\end{definition}

\begin{notation}\label{notation}
Denote the set of {\bf incoherent states} for a fixed basis $\cE=\{\ket{j}\}_j$ as $\cI=\{\rho=\sum_jp_j\ket{j}\bra{j}\}.$
A {\bf dephasing} operation in $\cE$ basis is the following map:
$$\Delta(\rho)=\sum_j \bra{j}\rho\ket{j} \ket{j}\bra{j}\ . $$
\end{notation}

\begin{definition}\label{def:IO} A CPTP map $\Phi$ with the following Kraus operators
$$\Phi(\rho)=\sum_n K_n \rho K_n^*\ , $$
is called {\bf the incoherent operation (IO)} or incoherent CPTP (ICPTP), when the Kraus operators satisfy
$$K_n \cI K_n^*\subset \cI,\ \text{for all }n \ , $$
besides the regular completeness relation $\sum_n K_n^*K_n=\idty$.
\end{definition}
Consider each $K_n$, in \cite{YXGS15} it was shown that condition $K_n \cI K_n^*\subset \cI$ implies that there exists at most one nonzero entry in every column of $K_n$. 

Any reasonable measure of coherence $\cC(\rho)$ should satisfy the following conditions
\begin{itemize}
\item (C1) $\cC(\rho)\geq 0$, and $\cC(\rho)=0$ if and only if $\rho\in\cI$;
\item (C2) Non-selective monotonicity under IO (monotonicity): for all IO $\Phi$ and all states $\rho$,
$$\cC(\rho)\geq \cC(\Phi(\rho))\ ; $$
\item (C3) Selective monotonicity under IO (strong monotonicity): for all IO $\Phi$ with Kraus operators $K_n$, and all states $\rho$,
$$\cC(\rho)\geq \sum_n p_n \cC(\rho_n)\ , $$
where $p_n$ and $\rho_n$ are the outcomes and post-measurement states
$$\rho_n=\frac{K_n\rho K_n^*}{p_n},\ \ p_n=\Tr K_n\rho K_n^*\ . $$
\item (C4) Convexity, 
$$\sum_n p_n \cC(\rho_n)\geq \cC\left(\sum_n p_n\rho_n\right)\ , $$
for any sets of states $\{\rho_n\}$ and any probability distribution $\{p_n\}$.
\end{itemize}
Conditions (C3) and (C4) together imply (C2) \cite{BC14}.

Alternatively, instead of the last two conditions, one can impose the following one 
\begin{itemize}
\item (C5) Additivity for subspace-independent states: For $p_1+p_2=1$, $p_1, p_2\geq 0$, and any two states $\rho_1$ and $\rho_2$, 
$$\cC(p_1\rho_1\oplus p_2\rho_2)=p_1\cC(\rho_1)+p_2\cC(\rho_2)\ .$$
\end{itemize}
In  \cite{YZ16} it was shown that (C3) and (C4) are equivalent to  (C5) condition.

These properties are parallel with the entanglement measure theory, where the average entanglement is not increased under the local operations and classical communication (LOCC). Notice that coherence measures that satisfy conditions (C3) and (C4) also satisfies condition (C2). 

In \cite{DS16} a class of incoherent operations was defined, called {genuinely incoherent operations (GIO)} as quantum operations that preserve all incoherent states.

\begin{definition}\label{def:GIO} An IO map $\Lambda$ is called a {\bf genuinely incoherent operation (GIO)} is for any incoherent state $\delta\in\cI$,
$$\Lambda(\delta)=\delta\ . $$
\end{definition}

Additionally, it was shown that an operation $\Lambda$ is GIO if and only if all Kraus representations of $\Lambda$ has all Kraus operators diagonal in a pre-fixed basis  \cite{DS16}. 

Conditions (C2), (C3) and (C4) can be restricted to GIO and obtain different classes of coherence measures.
\begin{definition}
In this case, a {\bf genuine coherence monotone} satisfies at least (C1) and (C2). And if a coherence measure fulfills conditions (C1), (C2), (C3) it is called {\bf measure of genuine coherence}. 
\end{definition}

A larger class than GIO, called SIO, was defined in \cite{WY16, YMGGV16}.
\begin{definition}\label{def:SIO}
An IO $\Lambda$ is called {\bf strictly incoherent operation (SIO)} if its Kraus representation operators commute with dephasing, i.e. for $\Lambda(\rho)=\sum_j K_j\rho K_j^*$, we have for any $j$,
$$K_j\Delta(\rho)K_j^*=\Delta(K_j\rho K_j^*) \ . $$
\end{definition}

Since Kraus operators of GIO are diagonal in $\cE$ basis, any GIO map is SIO as well, i.e. GIO $\subset$ SIO, \cite{DS16}. 

A class of operators generalizing SIO, called DIO, was introduced in \cite{CG16}.
\begin{definition}\label{def:DIO}
An IO $\Lambda$ is called {\bf dephasing-incoherent operation (DIO)} if it itself commute with dephasing operator, i.e.
$$\Lambda(\Delta(\rho))=\Delta(\Lambda(\rho)) \ . $$
\end{definition}
Thus, we have GIO $\subset$ SIO $\subset$ DIO.

One may consider an additional property, closely related to the entanglement theory:
\begin{itemize}
\item (C6) Uniqueness for pure states: for any pure state $\ket{\psi}$ coherence takes the form:
$$\cC(\psi)=S(\Delta(\psi))\ , $$
where $S$ is the von Neumann entropy and $\Delta$ is the dephasing operation defined as
$$\Delta(\rho)=\sum_j \bra{j}\rho\ket{j}\ket{j}\bra{j}\ . $$
\end{itemize}
However, for other coherence measures the von Neumann entropy in (C6) may change to another one, and the dephased state may also change to another free state.

\subsection{R\'enyi and Tsallis coherences}
As mentioned before, relative entropy of coherence can be defined using three expressions
\begin{equation}\label{RelativeCoh}
C(\rho)=\min_{\delta\in\cI}S(\rho\|\delta)=S(\rho\|\Delta(\rho))=S(\Delta(\rho))-S(\rho)\ .
\end{equation}
Let us point out that $\Delta(\rho)$ is the closest incoherent state to $\rho$ when measured by relative entropy, which was shown in \cite{BC14}.

Recall, that Tsallis entropy is defined as for $\alpha\in(0,2]$
$$S^T_\alpha(\rho)=\frac{1}{1-\alpha}\left[\Tr\rho^\alpha-1\right]\ , $$
Tsallis relative entropy is defined as
$$S_\alpha^T(\rho\|\delta)=\frac{1}{\alpha-1}\left[\Tr\left(\rho^\alpha\delta^{1-\alpha} \right)-1 \right] $$
R\'{e}nyi entropy is defined as for $\alpha\in(0,\infty)$
$$S_\alpha^R(\rho)=\frac{1}{1-\alpha}\log\Tr\rho^\alpha\ , $$
and R\'enyi relative entropy is defined as
$$S_\alpha^R(\rho\|\delta)=\frac{1}{\alpha-1}\log\Tr\left(\rho^\alpha\delta^{1-\alpha} \right)\ . $$

Motivated by different forms involved in the definition of relative entropy of coherence (\ref{RelativeCoh}), R\'enyi coherence has been defined as
\begin{align}
CR_\alpha^1 (\rho)&=\min_{\delta\in\cI}S_\alpha^R(\rho\|\delta)\label{CR-min}\ ,\\
CR_\alpha^2(\rho)&=S_\alpha^R(\Delta(\rho))-S_\alpha^R(\rho)\label{CR-diff}\ , \\
CR_\alpha^3(\rho)&=S_\alpha^R(\rho\|\Delta(\rho))\label{CR-rel}\ .
\end{align}
The first definition $CR^1_\alpha$ is a particular case of any distance-based coherence \cite{BC14}, and was separately discussed in \cite{SL17}. The second definition $CR^2_\alpha$ was introduced in \cite{DH21}. The third definition $CR^3_\alpha$ was introduced in \cite{CG16-2}.

Similarly, Tsallis coherence has been defined as
\begin{align}
CT_\alpha^1 (\rho)&=\min_{\delta\in\cI}S_\alpha^T(\rho\|\delta)\label{CT-min}\ ,\\
CT_\alpha^2(\rho)&=S_\alpha^T(\Delta(\rho))-S_\alpha^T(\rho)\label{CT-diff}\ .
\end{align}
The first definition $CT^1_\alpha$ is a particular case of any distance-based coherence \cite{BC14}. The second definition $CT^2_\alpha$ was introduced in \cite{DH21}. 

These definitions are all different, in particular, due to the fact that the closest incoherent state to a state $\rho$, when measured by either R\'enyi or Tsallis relative entropy, is not a state $\Delta(\rho)$. From \cite{CG16-2, R16} the closest incoherent state to a state $\rho$ for either R\'{e}nyi or Tsallis relative entropies is
\begin{equation}\label{eq:Delta-alpha}
\Delta_\alpha(\rho)=\frac{1}{N(\rho)}\sum_j\bra{j}\rho^\alpha\ket{j}^{1/\alpha}\ket{j}\bra{j}\ \in \cI\ , 
\end{equation}
where $N(\rho)=\sum_j\bra{j}\rho^\alpha\ket{j}^{1/\alpha}$.
The corresponding relative entropy becomes
\begin{equation}\label{eq:CT1}
CT_\alpha^1(\rho)=S_\alpha^T(\rho\|\Delta_\alpha(\rho))=\frac{1}{\alpha-1}\left[ N(\rho)^\alpha-1\right]\ , 
\end{equation}
and 
\begin{equation}\label{eq:CR1}
CR_\alpha^1(\rho)=S_\alpha^R(\rho\|\Delta_\alpha(\rho))=\frac{\alpha}{\alpha-1}\log N(\rho)\ .
\end{equation}

Interestingly enough difference-based Tsallis coherence when $\alpha=2$ is related to the distance-based coherence induced by the Hilbert-Schmidt distance \cite{DS16}
$$C_2^{HS}(\rho):=\min_{\delta\in\cI}\|\rho-\delta\|_2^2=S_2^T(\Delta(\rho))-S_2^T(\rho)\ , $$
where $\|\rho-\delta\|_2^2=\Tr (\rho-\delta)^2$.

\subsection{Generalized coherences}
Any proper distance $D(\rho, \sigma)$ between two quantum states, can induce a potential candidate for coherence. The distance-based coherence measure is defined as follows \cite{BC14}.

\begin{definition} $$CD(\rho):=\min_{\delta\in\cI}D(\rho, \delta)\ ,$$i.s. the minimal distance between the state $\rho$ and the set of incoherent states $\cI$ measured by the distance $D$.\end{definition}

\begin{itemize}
\item (C1) is satisfied whenever $D(\rho, \delta)=0$ iff $\rho=\delta$.
\item (C2) is satisfied whenever $D$ is contracting under CPTP maps, i.e. $D(\rho, \sigma)\geq D(\Phi(\rho), \Phi(\sigma))$.
\item (C4) is satisfied whenever $D$ is jointly convex.
\end{itemize}

Since the relative entropy, R\'enyi and Tsallis relative entropies satisfy all three above conditions for $\alpha\in[0,1)$, (C1), (C2), and (C4) are satisfied for $C(\rho)$, $CR^1_\alpha$, $CT^1_\alpha$.

Another generalization was considered in \cite{V22}, which is based on quasi-relative entropy.
\begin{definition}\label{def:qre}
For strictly positive bounded operators $A$ and $B$ acting on a finite-dimensional Hilbert space $\cH$, and for any {continuous} function $f: (0,\infty)\rightarrow \bR$, {\it the quasi-relative entropy} (or sometimes referred to as {\it the $f$-divergence}) is defined as 
$$S_f(A|| B)=\Tr(f(L_BR_A^{-1}){A})\ ,$$
where left and right multiplication operators are defined as $L_B(X)=BX$ and $R_A(X)=XA$. 
\end{definition}

Having the spectral decomposition of operators one can calculate the quasi-relative entropy explicitly \cite{HM17, V16}. Let $A$ and $B$ have the following spectral decomposition
\begin{equation}\label{eq:spectral}
A=\sum_j\lambda_j\ket{\phi_j}\bra{\phi_j}, \ \ B=\sum_k\mu_k\ket{\psi_k}\bra{\psi_k}\ .
\end{equation}
Here the sets $\{\ket{\phi_k}\bra{\psi_j}\}_{j,k}$, $\{\ket{\psi_k}\bra{\psi_j}\}_{j,k}$ form orthonormal bases of $\cB(\cH)$, the space of bounded linear operators. By \cite{V16}, the  quasi-relative entropy is calculated as follows
\begin{equation}\label{eq:formula}
S_f(A||B)=\sum_{j,k}\lambda_j f\left(\frac{\mu_k}{\lambda_j}\right)|\bra{\psi_k}\ket{\phi_j}|^2\ . 
\end{equation}

\begin{assumption}
To define $f$-coherence,  we  assume that the function $f$ is operator convex and operator monotone decreasing and $f(1)=0$.
\end{assumption}

 $f$-entropy was defined in two ways in \cite{V22}
\begin{align}
{S}^1_f(\rho):&=-S_f(\rho\|I)=-\sum_j \lambda_j f\left(\frac{1}{\lambda_j}\right)\label{eq:f-entropy-hat}\\
S_f^2(\rho):&=f(1/d)-S_f(\rho\|I/d)=f(1/d)-\sum_j \lambda_j f\left(\frac{1}{d\lambda_j}\right)\label{eq:f-entropy}\ ,
\end{align}
where $\{\lambda_j\}_j$ are the eigenvalues of $\rho$.

\begin{definition} For either $f$-entropy, the $f$-coherence is then defined as
\begin{equation}\label{eq:def-coh-f-hat}
{C}_f(\rho):={S}_f(\Delta(\rho))-{S}_f(\rho)\ .
\end{equation}
\end{definition}
If $\{\lambda_j\}$ are the eigenvalues of $\rho$, and the diagonal elements of $\rho$ in $\cE$ basis are $\chi_j=\bra{j}\rho\ket{j}$, then from  (\ref{eq:f-entropy-hat}), we have
\begin{align*}\label{eq:c_f-l-c-hat}
{C}^1_f(\rho)&=\sum_j \lambda_j f\left(\frac{1}{\lambda_j}\right)-\sum_j \chi_j f\left(\frac{1}{\chi_j}\right)\\
C_f^2(\rho)&=\sum_j \lambda_j f\left(\frac{1}{d\lambda_j}\right)-\sum_j \chi_j f\left(\frac{1}{d\chi_j}\right)\ ,
\end{align*}

Since $f(x)=-\log(x)$ is operator convex, coherence measure defined above coincides with the relative entropy of coherence (\ref{RelativeCoh}) \cite{BC14}:
\begin{align*}
C_{\log}(\rho)&=S_{\log}(\Delta(\rho))-S_{\log}(\rho)=S(\Delta(\rho))-S(\rho)=C(\rho)\ .
\end{align*}

The function $f(x)=\frac{1}{1-\alpha}(1-x^{1-\alpha})$ is operator convex for $\alpha\in(0,2)$. The coherence monotone then becomes the Tsallis relative entropy of coherence
\begin{equation*}
C^1_\alpha(\rho)=\frac{1}{1-\alpha} \left[\sum_j \chi_j^\alpha-\sum_j\lambda_j^\alpha \right]=CT^2_\alpha(\rho)\ .
\end{equation*}

\subsection{Properties}
Here we list which properties (C1)-(C5) are satisfied by which coherences and under which conditions. For R\'enyi and Tsallis entropies we do not consider a case when $\alpha=1$ and the entropies reduce to the relative entropy of coherence.\\

\begin{tabular}{|c|c|c|c|c|c|c|c|}
\hline
&(C1)&(C2) under&(C3) under&(C4)&(C5)\\
\hline
$CD$ &\checkmark&{IO}  \cite{BC14}&X&\checkmark&\\
\hline
$CR^1_\alpha$ $\alpha\in[0,1)$ &\checkmark&IO&X \cite{SL17}&\checkmark &\\
\hline
$CR^2_\alpha$ $\alpha\in(0,2]$ &\checkmark&GIO &see (a)&&X\\
\hline
$CR^3_\alpha$ &\checkmark&DIO \cite{CG16-2}&&&\\
\hline
$CT^1_\alpha$ $\alpha\in[0,1)$&\checkmark&IO&X&\checkmark&\\
\hline
$CT^2_\alpha$  $\alpha\in(0,2]$&\checkmark&GIO &see (a)&&X\\
\hline
$C_f$ &\checkmark&GIO& see (a)&&X\\
\hline
\end{tabular}

\vspace{0.2in}


The fact that $CT^2_\alpha$ and $CR^2_\alpha$ are monotone under GIO can be derived from GIO monotonicity of $C_f$ \cite{V22}, or it was shown separately in \cite{DH21}. There are examples when the monotonicity of both are violated under a larger class of operators when $\alpha>1$,\cite{DH21}.

$CT^2_\alpha$ satisfies a modified version of additivity (C5), which $CR^2_\alpha$ also violates \cite{DH21},
$$CT^2_\alpha(p_1\rho_1\oplus p_2\rho_2)=p_1^\alpha CT^2_\alpha(\rho_1)+p_2^\alpha CT^2_\alpha(\rho_2) \ . $$

(a) In \cite{V22} it was shown that $C_f$, and in particular $CR^2_\alpha$ and $CT^2_\alpha$, reach equality in the strong monotonicity under a convex mixture of diagonal unitaries in any dimension, which implies these coherences reach equality in strong monotonicity under GIO in 2- and 3- dimensions. Moreover, these coherences are strongly monotone under GIO on pure states in any dimension.

$CR^1_\alpha(\rho)$, $CT^1_\alpha(\rho)$ violate strong monotonicity \cite{R16, SL17}. In \cite{R16} it was shown that $CT^1_\alpha(\rho)$ satisfies a modified version of the strong monotonicity: for $\alpha\in(0,2]$
$$\sum_n p_n^\alpha q_n^{1-\alpha}CT^1_\alpha(\rho_n)\leq CT^1_\alpha(\rho) \ ,$$
where $p_n=\Tr(K_n\rho K_n^*)$, $q_n=\Tr(K_n \Delta_\alpha(\rho)K_n^*)$ and $\rho_n$ is a post-measurement state.

Clearly, (C6) is not satisfied for any R\'enyi or Tsallis coherences in its original form, therefore it was not included in the list. However, the values of coherences on pure states can be easily calculated in some cases.

\section{$f$-coherence distillation}
\subsection{Continuity of $f$-entropy and $f$-coherence}

In addition to the above list of properties of the $f$-coherence, one can add its continuity in the following form (this is a direct application of result in  \cite{Pin21}).
\begin{lemma}\label{thm:entropy-cont}
Let $\rho$ and $\sigma$ be two states such that $\epsilon:=\frac{1}{2}\|\rho-\sigma\|_1$. Then
\begin{align*} 
|{S}^1_f(\rho)-{S}^1_f(\sigma)|&\leq -(1-\epsilon)f\left(\frac{1}{1-\epsilon}\right)-\epsilon f\left(\frac{d-1}{\epsilon} \right)\\
|{S}_f^2(\rho)-{S}_f^2(\sigma)|&\leq f\left(\frac{1}{d} \right)-(1-\epsilon)f\left(\frac{1}{d(1-\epsilon)}\right)-\epsilon f\left(\frac{d-1}{d\epsilon} \right)\ .
\end{align*}
 Denote either of the right hand-sides as ${H}(\epsilon)$, and note that ${H}$ is continuous in $\epsilon$, and goes to zero when $\epsilon\rightarrow 0$.

\end{lemma}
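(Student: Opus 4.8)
The plan is to express both $f$-entropies as spectral sums $\sum_j G(\lambda_j)$ of a single \emph{concave} scalar function $G$, and then to invoke the Audenaert--Fannes-type continuity estimate for such sums established in \cite{Pin21}. First I would recast the entropies: by \eqref{eq:f-entropy-hat}, $S^1_f(\rho)=-\sum_j\lambda_j f(1/\lambda_j)=\sum_j G_1(\lambda_j)$ with $G_1(x):=-x f(1/x)$; and, absorbing the constant via $f(1/d)=\sum_j\lambda_j f(1/d)$, formula \eqref{eq:f-entropy} reads $S^2_f(\rho)=\sum_j\lambda_j\bigl(f(1/d)-f(1/(d\lambda_j))\bigr)=\sum_j G_2(\lambda_j)$ with $G_2(x):=x\,f(1/d)-x f(1/(dx))$. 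In both cases the quantity depends on the eigenvalues alone.

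The structural input is that $G_1,G_2$ are concave on $[0,1]$ with $G_i(1)=0$ and $G_i(0)=0$. Concavity comes from the perspective function: as $f$ is operator convex, it is convex, so its perspective $P_f(s,t):=t\,f(s/t)$ is jointly convex, whence $t\mapsto P_f(1,t)=t f(1/t)$ and $t\mapsto P_f(1/d,t)=t f(1/(dt))$ are convex; therefore $G_1=-P_f(1,\cdot)$ is concave and $G_2$ is concave as a linear term plus $-P_f(1/d,\cdot)$. The value $G_i(1)=0$ is immediate from $f(1)=0$, while $G_i(0)=0$ amounts to $\lim_{t\to\infty}f(t)/t=0$, i.e.\ that $f$ carries no linear part; this holds for the functions relevant here, e.g.\ $f(x)=-\log x$, giving $G_1(x)=-x\log x$, and the Tsallis family $f(x)=\tfrac1{1-\alpha}(1-x^{1-\alpha})$ with $\alpha\in(0,2)$, giving $G_1(x)=\tfrac1{1-\alpha}(x^\alpha-x)$.

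Then I would apply \cite{Pin21}: for a concave $G$ with $G(0)=0$ and states with $\epsilon=\tfrac12\|\rho-\sigma\|_1$, the estimate reads $\bigl|\sum_jG(\lambda_j)-\sum_kG(\mu_k)\bigr|\le G(1-\epsilon)+(d-1)G\bigl(\tfrac{\epsilon}{d-1}\bigr)$, the extremizer being a pure state against the most spread state allowed by the constraint. Substituting $G_1$ gives $-(1-\epsilon)f(\tfrac1{1-\epsilon})-\epsilon f(\tfrac{d-1}{\epsilon})$, while substituting $G_2$ returns $f(1/d)-(1-\epsilon)f(\tfrac1{d(1-\epsilon)})-\epsilon f(\tfrac{d-1}{d\epsilon})$, since the two linear contributions recombine into the single constant $f(1/d)$; these are exactly the asserted right-hand sides. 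Continuity of $H(\epsilon)$ and $H(\epsilon)\to0$ as $\epsilon\to0$ follow from continuity of $f$ and $G_i(1)=G_i(0)=0$.

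I expect the main obstacle to be care rather than depth: establishing concavity cleanly through the perspective identity, and checking that the normalization $G_i(0)=0$ (equivalently $f(t)/t\to0$) holds for the admissible $f$, since this is precisely what makes the closed forms coincide with the extremal value in \cite{Pin21}. Once these are in place, matching the two substituted expressions to the stated bounds---in particular the reappearance of $f(1/d)$ after the $\sum_j\lambda_j=1$ absorption---is a short computation.
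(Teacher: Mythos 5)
Your proposal is correct and follows essentially the same route as the paper: both rewrite $S^1_f$ and $S^2_f$ as traces of the convex transpose $\tilde f(x)=xf(1/x)$ (your perspective-function argument is the same convexity fact) and then invoke the Pinelis continuity bound. The only difference is cosmetic --- for $S^2_f$ the paper rescales to trace-$d$ operators and applies Pinelis's Corollary 3, whereas you absorb the constant $f(1/d)$ as a linear term in $G_2$; both substitutions produce the identical stated bounds, and your explicit check that $G_i(0)=0$ (i.e.\ $f(t)/t\to 0$) is the same normalization the paper handles via the convention $0\cdot\infty=0$.
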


\begin{proof}
Recall that for any convex function $f$, the transpose of it $\tilde{f}(x)=xf(1/x)$ is also convex. We adapt a convention $0\cdot \infty=0$, so for a convex  function $f$ such that $f(1)=0$, we have $\tilde{f}(0)=\tilde{f}(1)=0$. Then $f$- entropy (\ref{eq:f-entropy-hat}) can be written using a transpose function as
$$ {S}^1_f(\rho)=-S_f(\rho\|I)=-\Tr(\rho f(\rho^{-1}))=-\Tr(\tilde{f}(\rho))\ ,$$
and
\begin{align*} 
&S_f^2(\rho)=-S_f(\rho\|I/d)=f(1/d)-\Tr(\rho f(\{d\rho\}^{-1}))\\
&=f(1/d)-\frac{1}{d}\Tr(\tilde{f}(d\rho))\ .
\end{align*}

In \cite{Pin21} Theorem 1, it was proved that for $S_f(\rho)=-\Tr g(\rho)$ and any convex function $g$ the following holds
$$|S_g(\rho)-S_g(\sigma)|\leq g(1)-g(1-\epsilon)-(d-1)\left(g\left(\frac{\epsilon}{d-1} \right)-g(0) \right)\ , $$
when $\epsilon=\frac{1}{2}\|\rho-\sigma\|_1$.
 And in Corollary 3, the result was generalized for non-unit trace density matrices: let $\rho$ and $\sigma$ be two states of the same trace $t$, and let $\epsilon=\frac{1}{2}\|\rho-\sigma\|_1\in[0,t]$, then
 $$|S_g(\rho)-S_g(\sigma)|\leq g(t)-g(t-\epsilon)-(d-1)\left(g\left(\frac{\epsilon}{d-1} \right)-g(0) \right)\ . $$

Adapting this result to our situation, it holds that
\begin{align*} 
|{S}^1_f(\rho)-{S}^1_f(\sigma)|&\leq -(1-\epsilon)f\left(\frac{1}{1-\epsilon}\right)-\epsilon f\left(\frac{d-1}{\epsilon} \right)\ .
\end{align*}
 And similarly, for $\tilde{\epsilon}:=d\epsilon=\frac{1}{2}\|d\rho-d\sigma\|_1\in[0,d]$
 \begin{align*} 
&|{S}_f^2(\rho)-{S}_f^2(\sigma)|\\
&=\frac{1}{d}\left|\Tr(\tilde{f}(d\rho))-\Tr(\tilde{f}(d\sigma))\right|\\
&\leq\frac{1}{d}\left[\tilde{f}(d)-\tilde{f}(d-\tilde{\epsilon})-(d-1)\left(\tilde{f}(\tilde{\epsilon}/(d-1))-\tilde{f}(0)\right) \right]\\
&= f\left(\frac{1}{d} \right)-(1-\epsilon)f\left(\frac{1}{d(1-\epsilon)}\right)-\epsilon f\left(\frac{d-1}{d\epsilon} \right)\ .
\end{align*}
\end{proof}

 From this continuity result, one can obtain continuity of the $f$-coherence. 
 \begin{theorem}\label{thm:coh-cont}
 Let $\rho$ and $\sigma$ be two states such that $\epsilon:=\frac{1}{2}\|\rho-\sigma\|_1$. Let $H(\epsilon)$ be as in the previous theorem for the corresponding $f$-entropy. Then for  $f$-coherences we obtain
  \begin{align*} 
|{C}_f(\rho)-{C}_f(\sigma)|\leq 2{H}(\epsilon)\ .
\end{align*}
\end{theorem}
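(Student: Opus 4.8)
The plan is to reduce the continuity of the $f$-coherence to the continuity of the $f$-entropy already established in Lemma~\ref{thm:entropy-cont}. Writing out the definition $C_f(\rho)=S_f(\Delta(\rho))-S_f(\rho)$ for the corresponding $f$-entropy ($S_f^1$ or $S_f^2$) and applying the triangle inequality gives
$$|C_f(\rho)-C_f(\sigma)|\leq |S_f(\Delta(\rho))-S_f(\Delta(\sigma))|+|S_f(\rho)-S_f(\sigma)|\ .$$
The strategy is then simply to bound each of the two terms on the right by $H(\epsilon)$, so that their sum is the claimed $2H(\epsilon)$.

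The second term is immediate: since $\tfrac{1}{2}\|\rho-\sigma\|_1=\epsilon$, Lemma~\ref{thm:entropy-cont} gives $|S_f(\rho)-S_f(\sigma)|\leq H(\epsilon)$ with no further work. For the first term I would apply the same lemma to the pair of dephased states $\Delta(\rho),\Delta(\sigma)$, which are again legitimate density matrices, obtaining $|S_f(\Delta(\rho))-S_f(\Delta(\sigma))|\leq H(\epsilon')$ with $\epsilon':=\tfrac{1}{2}\|\Delta(\rho)-\Delta(\sigma)\|_1$.

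It then remains to compare $\epsilon'$ with $\epsilon$. The dephasing map $\Delta$ is CPTP (it is the pinching, equivalently a convex mixture of diagonal unitaries), so the trace distance contracts under it, $\|\Delta(\rho)-\Delta(\sigma)\|_1\leq\|\rho-\sigma\|_1$ — this is exactly the contractivity under CPTP maps invoked for (C2) — whence $\epsilon'\leq\epsilon$. Combining $H(\epsilon')\leq H(\epsilon)$ with the previous paragraph yields $|C_f(\rho)-C_f(\sigma)|\leq 2H(\epsilon)$.

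The step that must be checked carefully, and the main obstacle, is that $\epsilon'\leq\epsilon$ only gives $H(\epsilon')\leq H(\epsilon)$ if $H$ is non-decreasing on the relevant range. Rewriting the $S_f^1$ bound through the transpose $\tilde{f}(x)=xf(1/x)$ as $H(\epsilon)=-\tilde{f}(1-\epsilon)-(d-1)\tilde{f}(\epsilon/(d-1))$ and differentiating, convexity of $\tilde{f}$ (so $\tilde{f}'$ is non-decreasing) shows $H'(\epsilon)=\tilde{f}'(1-\epsilon)-\tilde{f}'(\epsilon/(d-1))\geq 0$ exactly when $1-\epsilon\geq \epsilon/(d-1)$, i.e. for $\epsilon\leq 1-1/d$; after rescaling by $d$ the identical computation applies to the $S_f^2$ bound. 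Thus $H$ is increasing precisely on the range of trace distances where the estimate is meaningful, which is the delicate point one must flag, while the factor two simply records the two entropy differences produced by the triangle inequality.
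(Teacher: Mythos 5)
Your proof is correct and follows essentially the same route as the paper's: the triangle inequality splitting $|C_f(\rho)-C_f(\sigma)|$ into the two entropy differences, contractivity of the trace norm under the dephasing channel, and Lemma~\ref{thm:entropy-cont} applied to each pair. The one genuine addition is your explicit check that $H$ is non-decreasing (on $\epsilon\le 1-1/d$, via convexity of the transpose $\tilde f$), which is needed to pass from $H(\epsilon')$ with $\epsilon'\le\epsilon$ to $H(\epsilon)$ and which the paper silently assumes; your computation of $H'$ is correct.
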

 \begin{proof} 
Let $\rho$ and $\sigma$ be two states with $\epsilon=\frac{1}{2}\|\rho-\sigma\|_1$.  Since trace-norm is monotone under CPTP maps, in particular, under dephasing operation, it follows that
 $$\|\Delta(\rho)-\Delta(\sigma)\|_1\leq\|\rho-\sigma\|_1\leq 2\epsilon\ . $$
 Therefore, from continuity results above Theorem \ref{thm:entropy-cont}, for either $f$-coherence and the corresponding $f$-entropy, we obtain
 \begin{align*} 
&|{C}_f(\rho)-{C}_f(\sigma)|\\
&\leq  |{S}_f(\Delta(\rho))-{S}_f(\Delta(\sigma))|+ |{S}_f(\rho)-{S}_f(\sigma)|\\
&\leq 2{H}(\epsilon)\ .
\end{align*}
\end{proof}

\subsection{Coherence distillation}

In \cite{DS16} it was shown that it is not possible to distill a higher coherence state $\sigma$ from a lower coherence state $\rho$ via GI operations when coherence is measured by a relative entropy of coherence (which equal to the distillable coherence). The same result holds for $f$-coherences as well, which relies on the continuity property of coherence above, and the GIO monotonicity of $f$-coherence \cite{V22}. For completeness sake, we present the adapted proof from \cite{DS16} below.

\begin{definition}
A state $\sigma$ can be distilled from the state $\rho$ at rate $0 < R \leq 1$ if there exists an operation $\rho^{\otimes n}\rightarrow \tau$ such that $\|\Tr_{ref}\tau-\sigma^{\otimes nR}\|_1\leq \epsilon$ and $\epsilon\rightarrow 0$ as $n\rightarrow\infty$. The optimal
rate at which distillation is possible is the supremum of $R$ over all protocols fulfilling the aforementioned conditions.
\end{definition}

\begin{theorem}\label{thm:distill}
 Given two states $\rho$ and $\sigma$ such that
$$C_f(\rho)<C_f(\sigma)\ ,$$
it is not possible to distill $\sigma$ from $\rho$ at any rate $R>0$ via GIO operations.
\end{theorem}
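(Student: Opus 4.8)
The plan is to argue by contradiction, combining the GIO monotonicity of $C_f$ (established in \cite{V22}) with the continuity estimate of Theorem \ref{thm:coh-cont}, following the strategy of \cite{DS16}. Suppose, contrary to the claim, that $\sigma$ can be distilled from $\rho$ at some rate $R>0$. By the definition of distillation there is, for each $n$, a GIO map sending $\rho^{\otimes n}$ to a state $\tau_n$ on system plus reference with $\|\Tr_{ref}\tau_n-\sigma^{\otimes m_n}\|_1\leq\epsilon_n$, where $m_n=\lfloor Rn\rfloor$ and $\epsilon_n\to 0$ as $n\to\infty$. The goal is to show that this forces a relation between $C_f(\rho)$ and $C_f(\sigma)$ incompatible with the hypothesis $C_f(\rho)<C_f(\sigma)$.

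First I would set up the monotonicity chain. The preparation of $\Tr_{ref}\tau_n$ from $\rho^{\otimes n}$ is the composition of a GIO with the partial trace over the reference, both of which are free operations that do not increase $C_f$; hence $C_f(\rho^{\otimes n})\geq C_f(\Tr_{ref}\tau_n)$. Next I would invoke continuity: applying Theorem \ref{thm:coh-cont} to the two states $\Tr_{ref}\tau_n$ and $\sigma^{\otimes m_n}$, which are $\epsilon_n$-close in trace norm, gives $|C_f(\Tr_{ref}\tau_n)-C_f(\sigma^{\otimes m_n})|\leq 2H(\epsilon_n)$, where $H$ is the continuity function associated with the relevant $f$-entropy on a space of dimension $d^{m_n}$. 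Chaining the two bounds yields the central single-shot inequality $C_f(\rho^{\otimes n})\geq C_f(\sigma^{\otimes m_n})-2H(\epsilon_n)$.

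The decisive step is to convert this into a contradiction by comparing the large-$n$ behaviour of the two tensor-power coherences. One divides by $n$, identifies the limits of $\tfrac1n C_f(\rho^{\otimes n})$ and $\tfrac1n C_f(\sigma^{\otimes m_n})$ through the asymptotics of $C_f$ on tensor powers, and lets $n\to\infty$. Because the input supplies only $n$ copies of the lower-coherence state $\rho$ while the output contains $m_n\sim Rn$ copies of the strictly higher-coherence state $\sigma$, the surviving inequality relates $C_f(\rho)$ to $R\,C_f(\sigma)$ and, in combination with $C_f(\rho)<C_f(\sigma)$, contradicts the existence of the assumed rate $R>0$.

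I expect the main obstacle to lie precisely in this last step, and it has two intertwined parts. First, the continuity bound of Theorem \ref{thm:coh-cont} is dimension-dependent, while the output dimension $d^{m_n}$ grows exponentially in $n$; one must therefore verify that $\tfrac1n H(\epsilon_n)\to 0$, i.e.\ that the decay of $\epsilon_n$ is fast enough relative to the growth of $d^{m_n}$ for the given $f$. Second, unlike the relative-entropy case, $C_f$ need not be additive for general operator-convex $f$, so the limits of $\tfrac1n C_f(\rho^{\otimes n})$ and $\tfrac1n C_f(\sigma^{\otimes m_n})$ must be extracted by a direct analysis of the tensor-power coherence rather than by invoking additivity; this is also the point at which the non-additive structure of $C_f$ is what ultimately rules out every positive rate. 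Once these two points are controlled, the contradiction with $C_f(\rho)<C_f(\sigma)$ closes the proof.
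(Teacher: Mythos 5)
Your overall strategy (contradiction, GIO monotonicity plus the continuity estimate, following \cite{DS16}) is the right one, but the proposal has a genuine gap: it is missing the single idea that makes the argument close, namely Lemma 12 of \cite{DS16}. That lemma states that for a GIO $\Lambda$ (all Kraus operators diagonal), the single-copy marginal $\Tr_{n-1}\Lambda(\rho^{\otimes n})$ is itself equal to $\tilde{\Lambda}(\rho)$ for some GIO $\tilde{\Lambda}$ acting on \emph{one} copy of $\rho$. With this reduction, the paper never needs to evaluate $C_f$ on tensor powers at all: one only asks the protocol to approximate a \emph{single} copy of $\sigma$, obtains $\|\tilde{\Lambda}(\rho)-\sigma\|_1\leq\epsilon$ in the fixed dimension $d$, applies Theorem \ref{thm:coh-cont} in that fixed dimension, and contradicts the single-copy monotonicity $C_f(\tilde{\Lambda}(\rho))\leq C_f(\rho)$ by choosing $\delta=\tfrac12(C_f(\sigma)-C_f(\rho))$.

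Without that reduction, your route does not go through as written. The two obstacles you flag at the end are not minor technicalities to be ``controlled'' but unresolved holes: the continuity function $H$ of Theorem \ref{thm:coh-cont} depends on the dimension, which in your setup is $d^{m_n}$ and grows exponentially, and you give no argument that $\tfrac1n H(\epsilon_n)\to 0$; and you have no additivity or regularization result for $C_f(\rho^{\otimes n})$ and $C_f(\sigma^{\otimes m_n})$, so the limits you propose to ``identify'' are not defined by anything in the paper. In addition, your first inequality $C_f(\rho^{\otimes n})\geq C_f(\Tr_{ref}\tau_n)$ silently assumes $C_f$ is monotone under partial trace, which is not among the established properties ($C_f$ is only known to be GIO monotone, and the partial trace is not a GIO); in the paper this issue is absorbed into Lemma 12, which handles the GIO composed with the partial trace in one stroke. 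Your closing claim that the non-additivity of $C_f$ is ``what ultimately rules out every positive rate'' is also not how the result works: the same argument applies verbatim to the additive relative entropy of coherence in \cite{DS16}, because the impossibility comes from the single-copy GIO structure, not from any failure of additivity.
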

\begin{proof} Suppose the contradiction holds, assume that there are two states $\rho$ and $\sigma$ such that $C_f(\rho)<C_f(\sigma)$, and that the distillation is possible. In particular, for large enough $n$,  it is possible to approximate one copy of $\sigma$. In other words, for any $\epsilon>0$, there is a GIO $\Lambda$ such that
$$\|\Tr_{n-1}\Lambda(\rho^{\otimes n})-\sigma\|_1\leq \epsilon\ . $$
By Lemma 12 in \cite{DS16}, there exists a GIO $\tilde{\Lambda}$ acting only on one copy of $\rho$, such that
$$\Tr_{n-1}\Lambda(\rho^{\otimes n})=\tilde{\Lambda}(\rho)\ . $$
Thus, for any $\epsilon>0$, there is a GIO $\tilde{\Lambda}$ such that
$$\|\tilde{\Lambda}(\rho)-\sigma\|_1\leq \epsilon\ . $$
Using the asymptotic continuity of $f$-coherence, Theorem \ref{thm:coh-cont}, for these two $\epsilon$-close states, we obtain
$$\left|C_f(\tilde{\Lambda}(\rho))-C_f(\sigma) \right| \leq 2H(\epsilon/2) \ .$$
Recall that $H(\epsilon)$ for either $f$-coherence is continuous in $\epsilon\in(0,1)$ and it goes to zero when $\epsilon\rightarrow 0$. Therefore, summarizing from the beginning, for any $\delta>0$, there is GIO $\tilde{\Lambda}$ such that
\begin{equation}\label{eq:delta}
\left|C_f(\tilde{\Lambda}(\rho))-C_f(\sigma) \right| < \delta \ .
\end{equation}
Take $\delta:=\frac{1}{2}(C_f(\sigma)-C_f(\rho))>0$. Since $C_f$ is GIO monotone, for any GIO $\Lambda$, we have $$C_f(\tilde{\Lambda}(\rho))\leq C_f(\rho)\ .$$
Therefore,
$$\delta\leq \frac{1}{2}(C_f(\sigma)-C_f(\tilde{\Lambda}(\rho))< C_f(\sigma)-C_f(\tilde{\Lambda}(\rho))\ .$$ 
This is a contradiction to (\ref{eq:delta}).

\end{proof}

\section{New R\'enyi and Tsallis coherences}

Playing off the last expression in the definition of the relative entropy of coherence \ref{RelativeCoh}, we define coherence measure as follows:
$$CT_\alpha(\rho):=S_\alpha^T(\Delta_\alpha(\rho))-S^T_\alpha(\rho)\ , $$
for Tsallis entropy, and
$$CR_\alpha(\rho):=S_\alpha^R(\Delta_\alpha(\rho))-S^R_\alpha(\rho)\ , $$
for R\'enyi entropy. Recall that here $\Delta_\alpha(\rho)$ is the closest incoherent state to $\rho$ when measured by the R\'{e}nyi or Tsallis relative entropy, i.e.
$$S_\alpha(\rho\|\Delta_\alpha(\rho)):=\min_{\delta\in \mathcal{I}}S_\alpha(\rho\|\delta)\ . $$
Recall from (\ref{eq:Delta-alpha}) that
 \begin{equation*}
\Delta_\alpha(\rho)=\frac{1}{N(\rho)}\sum_j\bra{j}\rho^\alpha\ket{j}^{1/\alpha}\ket{j}\bra{j}\ , 
\end{equation*}
where $N(\rho)=\sum_j\bra{j}\rho^\alpha\ket{j}^{1/\alpha}$. Having this explicit form of $\Delta_\alpha(\rho)$ , both coherences can be explicitly calculated
 \begin{align*} 
 CT_\alpha(\rho)&= \frac{1}{1-\alpha}\left[\Tr\left(\Delta_\alpha(\rho)^\alpha \right)-\Tr\rho^\alpha \right]\\
 &=\frac{1}{1-\alpha}\left[\frac{1}{N(\rho)^\alpha}-1\right]\Tr\rho^\alpha\\
 &=\frac{N(\rho)^\alpha-1}{\alpha-1}\frac{\Tr\rho^\alpha}{N(\rho)^\alpha}\\
 &=S_\alpha^T(\rho\|\Delta_\alpha(\rho))\, \frac{\Tr\rho^\alpha}{N(\rho)^\alpha}\\
 &=CT_\alpha^1(\rho)\, \frac{\Tr\rho^\alpha}{N(\rho)^\alpha}\\
 &\geq 0\ .
 \end{align*}
The last two equalities come from (\ref{eq:CT1}). Similarly, from (\ref{eq:CR1}) for the R\'enyi coherence
 \begin{align*} 
 CR_\alpha(\rho)&= \frac{1}{1-\alpha}\left[\log\Tr\left(\Delta_\alpha(\rho)^\alpha \right)-\log\Tr\rho^\alpha \right]\\
 &=\frac{1}{1-\alpha}\left[\log\left(\frac{1}{N(\rho)^\alpha}\Tr\rho^\alpha\right)-\log\Tr\rho^\alpha\right]\\
 &=\frac{\alpha}{\alpha-1}\log N(\rho)\\
 &=S_\alpha^R(\rho\|\Delta_\alpha(\rho))\\
 &=CR_\alpha^1(\rho)\ .
 \end{align*}
This means that for R\'enyi entropy of coherence we have a similar expressions to the relative entropy of coherence (\ref{RelativeCoh})
$$CR^1_\alpha(\rho)=\min_{\delta\in\mathcal{I}}S_\alpha^R(\rho\|\delta)=S_\alpha^R(\rho\|\Delta_\alpha(\rho))=S^R_\alpha(\Delta_\alpha(\rho))-S^R_\alpha(\rho)\ . $$
Therefore the distance-based R\'{e}nyi coherence $CR_\alpha^1(\rho)$ coincides with the new definition $CR_\alpha(\rho)$. Before moving on to investigation of the new Tsallis coherence, let us show one result on R\'enyi coherence.

\begin{theorem}\label{thm:cont-R}
Let $\rho=\ket{\psi}\bra{\psi}$ and $\sigma=\ket{\phi}\bra{\phi}$ be pure states on $\mathbb{C}^d$ such that $\frac{1}{2}\|\rho-\sigma\|_1=\epsilon$. Then, we obtain
$$\left|CR^1_\alpha(\rho)-CR^1_\alpha(\sigma)\right|\leq \frac{\alpha}{1-\alpha}\log \left(d^{1-\frac{1}{\alpha}}+H(\epsilon) \right) +\log d\ ,  \text{ for $0<\alpha<1$\ ,}$$
and 
$$\left|CR^1_\alpha(\rho)-CR^1_\alpha(\sigma)\right|\leq \frac{\alpha}{\alpha-1}\log (1-H(\epsilon)) \ ,  \text{ for $1<\alpha<2$\ ,}$$
where $H(\epsilon)=1-(1-\epsilon)^{1/\alpha}-\epsilon^{1/\alpha}(d-1)^{1-\frac{1}{\alpha}}$. Both right hand-sides converge to zero when $\epsilon$ goes to zero.
\end{theorem}
\begin{proof}
Denote $\chi_j=|\langle \psi| j\rangle|^2$ and  $\xi_j=|\langle \phi| j\rangle|^2$. Then,
\begin{align*}
|CR^1_\alpha(\rho)-CR^1_\alpha(\sigma)|&=\frac{\alpha}{|1-\alpha|}\left|\log\left(\sum_j\chi_j^{1/\alpha} \right)-\log\left(\sum_j\xi_j^{1/\alpha} \right) \right|\\
&=\frac{\alpha}{|1-\alpha|}\left|\log\Tr f(\Delta(\rho))-\log \Tr f(\Delta(\sigma)) \right|\ ,
\end{align*}
where $f(x)=x^{1/\alpha}$ is convex function for $0<\alpha<1$ and $-f$ is convex for $\alpha>1$, and  recall that $\Delta(\rho)=\sum_j\chi_j\ket{j}\bra{j}$ and $\Delta(\sigma)=\sum_j\xi_j\ket{j}\bra{j}$.

Since trace-norm is monotone under CPTP maps, and $\Delta$ is a CPTP map, we obtain
$$\frac{1}{2}\|\Delta(\rho)-\Delta(\sigma)\|_1\leq \frac{1}{2}\|\rho-\sigma\|_1=\epsilon\ . $$
 By continuity of $f$-entropy  \cite{Pin21}, the difference for $0<\alpha<1$ is bounded by
 $$\left|\Tr f(\Delta(\rho)) -\Tr f(\Delta(\sigma))  \right|\leq H(\epsilon)\ , $$
 where $H(\epsilon)$ is calculated for $f(x)=x^{1/\alpha}$, and therefore has expression as in the theorem statement.
 For $\alpha>1$, $-f$ is convex, and therefore, 
 $$\left|\Tr f(\Delta(\rho)) -\Tr f(\Delta(\sigma))  \right|\leq -H(\epsilon)\ , $$
 where the right-hand side is positive for $\alpha>1$.

For $0<\alpha<1$, notice that the constant sequence is majorized by both $(\frac{1}{d})_j\prec (\chi)_j$ and $(\frac{1}{d})_j\prec (\xi)_j$,   therefore, since $f(x)=x^{1/\alpha}$ is a convex function, by results on Schur-concavity \cite{HLP29, MOA10, S23}, we have $\sum_j\chi_j^{1/\alpha}, \sum_j\xi_j^{1/\alpha}\geq d^{1-\frac{1}{\alpha}}$. For $\alpha>1$,  since $x<x^{1/\alpha}$, then   $\sum_j\chi_j^{1/\alpha}, \sum_j\xi_j^{1/\alpha}>1$.

For the function $g(x)=\log x$, by the Mean Value Theorem, for $0<\alpha<1$, there exist $c\in[d^{1-\frac{1}{\alpha}}, 1]$, such that 
\begin{align*}
\left|\log\Tr f(\Delta(\rho)) -\log \Tr f(\Delta(\sigma)) \right|&=\left|\Tr f(\Delta(\rho)) -\Tr f(\Delta(\sigma))  \right||g'(c)|\ .
\end{align*}
Denote $\delta:=\left|\Tr f(\Delta(\rho)) -\Tr f(\Delta(\sigma))  \right|$. Then, by the Mean Value Theorem, there exists $s\in[d^{1-\frac{1}{\alpha}}, 1]$ such that $\left|g(d^{1-\frac{1}{\alpha}}+\delta)-g(d^{1-\frac{1}{\alpha}}) \right|=\delta\, g'(s)$. Since $s<c$, we have $|g'(s)|>|g'(c)|$, and therefore
\begin{align*}
\left|\log\Tr f(\Delta(\rho)) -\log \Tr f(\Delta(\sigma)) \right|&\leq \left|g(d^{1-\frac{1}{\alpha}}+\delta)-g(d^{1-\frac{1}{\alpha}}) \right|\\
&\leq \left|g(d^{1-\frac{1}{\alpha}}+H(\epsilon))-g(d^{1-\frac{1}{\alpha}}) \right|\\
&=\log \left(d^{1-\frac{1}{\alpha}}+H(\epsilon) \right) +\frac{1-\alpha}{\alpha}\log d\ .
\end{align*}

For $\alpha>1$, by the Mean Value Theorem, there exists $c\geq 1$, such that
\begin{align*}
\left|\log\Tr f(\Delta(\rho))-\log\Tr f(\Delta(\sigma))  \right|&=\left|\Tr f(\Delta(\rho)) -\Tr f(\Delta(\sigma))  \right||g'(c)|\ .
\end{align*}

Denote $\delta:=\left|\Tr f(\Delta(\rho)) -\Tr f(\Delta(\sigma))  \right|$. Then, by the Mean Value Theorem, there exists $s> 1$ such that $\left|g(1+\delta)-g(1) \right|=\delta\, g'(s)$. Since $s<c$, we have $|g'(s)|>|g'(c)|$, and therefore
\begin{align*}
\left|\log\Tr f(\Delta(\rho))-\log\Tr f(\Delta(\sigma))  \right|&\leq \left|g(1+\delta)-g(1) \right|\\
&\leq  \left|g(1-H(\epsilon))-g(1) \right|\\
&=\log (1-H(\epsilon))\ .
\end{align*}

Thus, we obtain the statement of the theorem.
\end{proof}

\section{Tsallis coherence}

\subsection{Positivity} As we noted above, the Tsallis coherence is non-negative. Note that this is a non-trivial statement, that cannot be directly observed by the monotonicity of entropy under linear CPTP maps, as it was done for $CT^2_\alpha, CR^2_\alpha, C_f$, since the map $\rho\rightarrow \Delta_\alpha(\rho)$ is non-linear.

\subsection{Vanishing only on incoherent states} 
\begin{proposition}
$CT_\alpha(\rho)=0$ if and only if $\rho\in\cI$ is incoherent.
\end{proposition}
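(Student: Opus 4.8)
The plan is to reduce the statement, via the factorization $CT_\alpha(\rho)=CT_\alpha^1(\rho)\,\frac{\Tr\rho^\alpha}{N(\rho)^\alpha}$ derived just above, to the vanishing of $CT_\alpha^1(\rho)$, and then to characterize that vanishing directly through the quantity $N(\rho)$. First I would observe that for any density matrix $\rho$ and any $\alpha\in(0,2]$ the prefactor $\frac{\Tr\rho^\alpha}{N(\rho)^\alpha}$ is strictly positive: indeed $\Tr\rho^\alpha=\sum_j\bra{j}\rho^\alpha\ket{j}>0$ forces at least one diagonal entry $\bra{j}\rho^\alpha\ket{j}$ to be positive, whence $N(\rho)=\sum_j\bra{j}\rho^\alpha\ket{j}^{1/\alpha}>0$. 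Consequently $CT_\alpha(\rho)=0$ if and only if $CT_\alpha^1(\rho)=\frac{1}{\alpha-1}\bigl(N(\rho)^\alpha-1\bigr)=0$, i.e.\ if and only if $N(\rho)=1$.

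The $(\Leftarrow)$ direction is then immediate: if $\rho\in\cI$ then $\rho=\sum_jp_j\ket{j}\bra{j}$, so $\bra{j}\rho^\alpha\ket{j}=p_j^\alpha$, giving $N(\rho)=\sum_jp_j=1$ and $\Delta_\alpha(\rho)=\rho$, hence $CT_\alpha(\rho)=0$. The substance is the $(\Rightarrow)$ direction, for which I would compare each term $\bra{j}\rho^\alpha\ket{j}^{1/\alpha}$ with the diagonal element $\chi_j=\bra{j}\rho\ket{j}$ using Jensen's inequality. Writing the spectral decomposition $\rho=\sum_k\lambda_k\ket{\phi_k}\bra{\phi_k}$, the weights $w_{jk}=|\langle j|\phi_k\rangle|^2$ form a probability distribution in $k$ for each fixed $j$, and $\bra{j}\rho^\alpha\ket{j}=\sum_kw_{jk}\lambda_k^\alpha$ while $\chi_j=\sum_kw_{jk}\lambda_k$. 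Applying Jensen to $t\mapsto t^\alpha$ (convex for $\alpha>1$, concave for $\alpha<1$) and then the increasing map $t\mapsto t^{1/\alpha}$ yields $\bra{j}\rho^\alpha\ket{j}^{1/\alpha}\ge\chi_j$ for $\alpha\in(1,2]$ and $\le\chi_j$ for $\alpha\in(0,1)$; summing over $j$ and using $\sum_j\chi_j=1$ gives $N(\rho)\ge1$, respectively $N(\rho)\le1$. In either case $N(\rho)=1$ forces equality in every one of these $d$ Jensen inequalities.

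The main obstacle is the equality analysis: I must show that simultaneous equality in all $d$ inequalities forces $\rho$ to be diagonal, not merely that the diagonals agree. Since $t\mapsto t^\alpha$ is \emph{strictly} convex (resp.\ concave) for $\alpha\ne1$, equality in the $j$-th inequality means $\lambda_k$ is constant across all eigenvectors $\ket{\phi_k}$ with $w_{jk}>0$; calling this common value $\mu_j$, one computes $\rho\ket{j}=\sum_k\lambda_k\ket{\phi_k}\langle\phi_k|j\rangle=\mu_j\ket{j}$, so $\ket{j}$ is an eigenvector of $\rho$. Demanding this for every $j$ in the orthonormal basis $\cE$ shows $\rho$ is diagonal in $\cE$, i.e.\ $\rho\in\cI$, which completes the argument. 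As an alternative that sidesteps this computation, I could instead invoke faithfulness of the Tsallis relative entropy, $S_\alpha^T(\rho\|\delta)\ge0$ with equality iff $\rho=\delta$: since $CT_\alpha^1(\rho)=S_\alpha^T(\rho\|\Delta_\alpha(\rho))$ and $\Delta_\alpha(\rho)\in\cI$, vanishing of $CT_\alpha^1$ gives $\rho=\Delta_\alpha(\rho)\in\cI$ at once. I prefer the Jensen route, however, because it is self-contained and makes the equality condition fully transparent.
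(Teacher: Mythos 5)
Your proposal is correct, and its skeleton matches the paper's: both reduce $CT_\alpha(\rho)=0$ to $S_\alpha^T(\rho\|\Delta_\alpha(\rho))=0$ via the factorization $CT_\alpha(\rho)=CT_\alpha^1(\rho)\,\Tr\rho^\alpha/N(\rho)^\alpha$ together with strict positivity of the prefactor, and both handle the $(\Leftarrow)$ direction by noting $\Delta_\alpha(\rho)=\rho$ for incoherent $\rho$. Where you diverge is the final step of the $(\Rightarrow)$ direction: the paper simply invokes faithfulness of the Tsallis relative entropy (``$S_\alpha^T(\rho\|\Delta_\alpha(\rho))=0$ happens only when $\rho=\Delta_\alpha(\rho)$''), which is exactly the alternative you mention and set aside, whereas your main route recasts the condition as $N(\rho)=1$ and runs a Jensen equality analysis on $\bra{j}\rho^\alpha\ket{j}$ versus $\bra{j}\rho\ket{j}^\alpha$ in the eigenbasis of $\rho$. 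Your equality analysis is sound: strict convexity (resp.\ concavity) of $t\mapsto t^\alpha$ for $\alpha\neq1$ forces the eigenvalues to be constant on the support of each row of weights $w_{jk}=|\langle j|\phi_k\rangle|^2$, and the completeness relation then gives $\rho\ket{j}=\mu_j\ket{j}$ for every $j$, hence diagonality. What your route buys is self-containment and an explicit equality condition — it does not presuppose the (true but unproved-here) statement that $\Tr(\rho^\alpha\delta^{1-\alpha})\le 1$ with equality iff $\rho=\delta$; what the paper's route buys is brevity, at the cost of leaning on that faithfulness property as known. Both are acceptable; yours is arguably the more complete argument as written.
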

\begin{proof}
First, suppose that the state $\rho\in\cI$ is incoherent, then $\Delta_\alpha(\rho)=\rho$. Therefore, $CT_\alpha(\rho)=S^T_\alpha(\Delta_\alpha(\rho))-S^T_\alpha(\rho)=0.$

Now, suppose that $CT_\alpha(\rho)=0.$ From calculations above, since $\Tr\rho^\alpha>0$ for a non-zero state, this means that $S_\alpha^T(\rho\|\Delta_\alpha(\rho))=0$, which happens only when $\rho=\Delta_\alpha(\rho)\in \cI$. Therefore, $\rho\in\cI$ is incoherent.
\end{proof}

\subsection{Value on pure states}
Let $\rho=\ket{\psi}\bra{\psi}$ be a pure state. Since $\rho^\alpha=\rho$, then
 \begin{align*} 
 CT_\alpha(\rho)&= \frac{1}{1-\alpha}\left[\Tr\left(\Delta_\alpha(\rho)^\alpha \right)-\Tr\rho^\alpha \right]=S_\alpha^T(\Delta_\alpha(\rho))\ .
 \end{align*}
To calculate this Tsallis entropy explicitly, we note that $\Tr(\Delta_\alpha(\ket{\psi}\bra{\psi})^\alpha)=N(\ket{\psi}\bra{\psi})^{-\alpha}$, where 
$N(\ket{\psi}\bra{\psi})=\sum_j |\langle\psi|j\rangle|^{2/\alpha} .$
Thus,
$$CT_\alpha(\rho)= \frac{1}{1-\alpha}\left[\left( \sum_j |\langle\psi|j\rangle|^{2/\alpha}\right)^{-\alpha}-1\right]\ .$$

\subsection{Comparison with $CT_\alpha^1$}
Recall that from our previous calculations,
 \begin{align*} 
 CT_\alpha(\rho)=CT_\alpha^1(\rho)\, \frac{\Tr\rho^\alpha}{N(\rho)^\alpha}\ .
 \end{align*}
 Let us denote as $\lambda_j:=\bra{j}\rho^\alpha\ket{j}$. Then 
 $$\Tr(\rho^\alpha)=\sum_j\bra{j}\rho^\alpha\ket{j}=\|\lambda\|_1\ .$$
 And 
 $$N(\rho)^\alpha=\left(\sum_j\bra{j}\rho^\alpha\ket{j}^{1/\alpha}\right)^\alpha=\|\lambda\|_{1/\alpha}\ .  $$
 Here $\|\cdot\|_p$ denotes the Schatten $p$-norm. Since Schatten $p$-norms are monotone decreasing in $p$, we have that
 $$CT_\alpha(\rho)\geq CT_\alpha^1(\rho)\ , \text{ for }0<\alpha<1\ , $$
 and 
  $$CT_\alpha(\rho)\leq CT_\alpha^1(\rho)\ , \text{ for }1<\alpha<2\ . $$

\subsection{Monotonicity} 
\begin{theorem}\label{thm:inv-uni}
$CT_\alpha(\rho)$ is invariant under diagonal unitaries. 
\end{theorem}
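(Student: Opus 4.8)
The plan is to work directly from the explicit formula
$$CT_\alpha(\rho) = \frac{1}{1-\alpha}\left[\Tr(\Delta_\alpha(\rho)^\alpha) - \Tr\rho^\alpha\right]$$
and to show that each of the two traces appearing here is unchanged when $\rho$ is replaced by $U\rho U^*$, where $U = \sum_j e^{i\theta_j}\ket{j}\bra{j}$ is an arbitrary unitary diagonal in the fixed basis $\cE$. Two elementary facts drive the argument. First, since $U$ is unitary, the spectral calculus gives $(U\rho U^*)^\alpha = U\rho^\alpha U^*$, and hence by cyclicity $\Tr((U\rho U^*)^\alpha) = \Tr(U\rho^\alpha U^*) = \Tr\rho^\alpha$. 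This settles the second trace immediately.

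Second, I would show that $\Delta_\alpha$ itself is invariant, i.e. $\Delta_\alpha(U\rho U^*) = \Delta_\alpha(\rho)$. The key point is that a diagonal unitary leaves every diagonal matrix element fixed: using $U^*\ket{j} = e^{-i\theta_j}\ket{j}$ one computes
$$\bra{j}(U\rho U^*)^\alpha\ket{j} = \bra{j}U\rho^\alpha U^*\ket{j} = e^{i\theta_j}\bra{j}\rho^\alpha\ket{j}e^{-i\theta_j} = \bra{j}\rho^\alpha\ket{j}.$$
Consequently the normalization $N(U\rho U^*) = \sum_j \bra{j}(U\rho U^*)^\alpha\ket{j}^{1/\alpha}$ equals $N(\rho)$, and every diagonal coefficient of $\Delta_\alpha(U\rho U^*)$ in (\ref{eq:Delta-alpha}) coincides with the corresponding one of $\Delta_\alpha(\rho)$. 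Since both operators are diagonal in $\cE$ with identical entries, they are equal, whence $\Tr(\Delta_\alpha(U\rho U^*)^\alpha) = \Tr(\Delta_\alpha(\rho)^\alpha)$.

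Combining the two displays gives $CT_\alpha(U\rho U^*) = CT_\alpha(\rho)$, which is the claim. I do not anticipate a genuine obstacle here: the statement is essentially a bookkeeping consequence of the fact that conjugation by a diagonal unitary commutes with taking the $\alpha$-power and preserves diagonal entries in the chosen basis. The only step warranting a line of care is justifying $(U\rho U^*)^\alpha = U\rho^\alpha U^*$, which holds because for unitary $U$ one has $g(U\rho U^*) = U g(\rho) U^*$ for any function applied through the spectral calculus; taking $g(x) = x^\alpha$ on the spectrum of $\rho$ yields the needed identity.
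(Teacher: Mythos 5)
Your proposal is correct and follows essentially the same route as the paper: both arguments reduce to the observation that conjugation by a diagonal unitary commutes with the $\alpha$-power and fixes every diagonal matrix element $\bra{j}\rho^\alpha\ket{j}$, so that $\Delta_\alpha(U\rho U^*)=\Delta_\alpha(\rho)$, and then invoke unitary invariance of the Tsallis entropy (which you unpack explicitly via $\Tr((U\rho U^*)^\alpha)=\Tr\rho^\alpha$). The only difference is cosmetic: you spell out the spectral-calculus identity $(U\rho U^*)^\alpha=U\rho^\alpha U^*$ that the paper uses implicitly.
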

\begin{proof}
Let $U=\sum_n e^{i\phi_n}\ket{n}\bra{n}$ be a unitary diagonal in  $\cE$ basis. Then 
\begin{align*}
\Delta_\alpha(U\rho U^*)&=\frac{1}{\sum\bra{j}U\rho^\alpha U^*\ket{j}^{1/\alpha}}\sum\bra{j}U\rho^\alpha U^*\ket{j}^{1/\alpha}\ket{j}\bra{j}\\
&=\frac{1}{\sum\bra{j}e^{i\phi_j}\rho^\alpha e^{-i\phi_j} \ket{j}^{1/\alpha}}\sum\bra{j}e^{i\phi_j}\rho^\alpha e^{-i\phi_j}\ket{j}^{1/\alpha}\ket{j}\bra{j}\\
&=\Delta_\alpha(\rho)
\end{align*}
Since the Tsallis entropy is invariant under unitaries itself, we have
$$CT_\alpha(U\rho U^*)=CT_\alpha(\rho)\ . $$
\end{proof}

\begin{theorem}\label{thm:monoGIO}
Tsallis coherence is not monotone under GIO.
\end{theorem}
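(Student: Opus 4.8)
The plan is to prove the statement by exhibiting an explicit genuinely incoherent operation $\Lambda$ together with a state $\rho$ for which $CT_\alpha(\Lambda(\rho)) > CT_\alpha(\rho)$, thereby violating monotonicity (C2) restricted to GIO. The starting point is the structural description of GIO recalled after Definition \ref{def:GIO}: since every Kraus operator of a GIO is diagonal in the $\cE$ basis, a GIO acts on matrix entries as a Hadamard (Schur) product $\Lambda(\rho)_{jl} = \gamma_{jl}\rho_{jl}$ with a correlation matrix $\Gamma = (\gamma_{jl})$ (positive semidefinite, $\gamma_{jj}=1$); in particular every GIO preserves the diagonal of $\rho$ and only damps its off-diagonal entries. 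Combined with the closed form $CT_\alpha(\rho) = \frac{1}{1-\alpha}\big(\Tr\rho^\alpha/N(\rho)^\alpha - \Tr\rho^\alpha\big)$, which depends on $\rho$ only through $\Tr\rho^\alpha$ and the diagonal entries $\bra{j}\rho^\alpha\ket{j}$, this reduces the problem to a finite computation of these two quantities before and after the Schur damping.

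Next I would argue that the counterexample cannot live in dimension two and must instead use $d \ge 3$. Indeed, by Theorem \ref{thm:inv-uni} $CT_\alpha$ is invariant under diagonal unitaries, so a violation requires a genuinely non-unitary (mixing) GIO. In dimension two any GIO merely rescales the single off-diagonal entry by some $\gamma\in[0,1]$, and a direct inspection of $CT_\alpha$ as a function of $|\rho_{01}|$ (with the diagonal fixed) shows it to be monotone there, so no qubit example exists. I would therefore fix $d=3$, take a suitable coherent state $\rho$ carrying two off-diagonal coherences of different strength, and a correlation matrix $\Gamma$ that damps one selected off-diagonal block while leaving the rest intact — for instance a $\Gamma$ that partially dephases only the coherences between $\{1,2\}$ and $3$. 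Such a $\Gamma$ is a legitimate GIO.

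The mechanism to exploit is the competition encoded in the factorization $CT_\alpha(\rho) = CT^1_\alpha(\rho)\cdot \Tr\rho^\alpha/N(\rho)^\alpha$, where the first factor $CT^1_\alpha$ is a distance-based coherence and hence GIO-monotone, while the second factor $\Tr\rho^\alpha/N(\rho)^\alpha = \|\lambda\|_1/\|\lambda\|_{1/\alpha}$ (with $\lambda_j=\bra{j}\rho^\alpha\ket{j}$) measures the spread of the diagonal of $\rho^\alpha$ and can \emph{increase} under the same damping. I would then evaluate $\Tr\rho^\alpha$ and $\bra{j}\rho^\alpha\ket{j}$ for $\rho$ and $\Lambda(\rho)$ and exhibit an explicit $\alpha$ together with explicit parameters for which the increase of the second factor overwhelms the decrease of $CT^1_\alpha$, giving $CT_\alpha(\Lambda(\rho)) > CT_\alpha(\rho)$ strictly.

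The main obstacle is precisely this nonlinearity: because $\rho\mapsto\Delta_\alpha(\rho)$ and $\rho\mapsto\rho^\alpha$ are nonlinear (as already noted in the positivity discussion), no data-processing or contractivity argument is available, so the strict inequality cannot be obtained abstractly and must be verified by a direct, if elementary, computation. The delicate point is locating parameters $(\rho,\Gamma,\alpha)$ at which the two opposing monotone trends in $\Tr\rho^\alpha$ and $N(\rho)$ tip the balance upward, which is exactly why a careful choice in $d=3$ rather than $d=2$ is needed. Since only a single counterexample is required, it suffices to display one such triple; one may additionally remark that monotonicity can still survive under a more restrictive subclass of operations, consistent with the abstract.
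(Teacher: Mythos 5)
Your proposal is a plan for a proof, not a proof: a non-monotonicity statement is established only by actually exhibiting the triple $(\rho,\Lambda,\alpha)$ and verifying the two numbers $CT_\alpha(\rho)$ and $CT_\alpha(\Lambda(\rho))$, and you stop exactly at the point where this would have to be done (``I would then evaluate \dots and exhibit an explicit $\alpha$ together with explicit parameters\dots''). The heuristic you give for why a violation should exist --- the factorization $CT_\alpha(\rho)=CT^1_\alpha(\rho)\cdot\Tr\rho^\alpha/N(\rho)^\alpha$ with a GIO-monotone first factor and a possibly increasing second factor --- is a sensible place to look, but it does not by itself produce the strict inequality, as you yourself acknowledge. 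Without the explicit state, the explicit Kraus operators, the explicit $\alpha$, and the comparison of the two resulting values, the theorem is not proved.

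There is also a concrete step that conflicts with the paper: you assert that no qubit counterexample exists because ``a direct inspection of $CT_\alpha$ as a function of $|\rho_{01}|$ (with the diagonal fixed) shows it to be monotone there,'' and on that basis you move to $d\ge 3$. You give no argument for this monotonicity claim, and it is not obvious --- $\Tr\rho^\alpha$ and the diagonal of $\rho^\alpha$ both depend non-trivially (and, for $\alpha<1$, non-Lipschitz at the pure-state boundary) on the off-diagonal entry, so the one-variable function you invoke is genuinely delicate. The paper's own proof works entirely in $d=2$: it takes the pure state $\rho=\ket{\psi}\bra{\psi}$ with $|\langle\psi|0\rangle|^2=3/4$, the GIO with the two diagonal Kraus operators $K_1=\mathrm{diag}(1/\sqrt{2},\sqrt{3}/2)$ and $K_2=\mathrm{diag}(1/\sqrt{2},1/2)$ (a Schur damping of the single off-diagonal entry by a factor $(\sqrt{3}+1)/(2\sqrt{2})$), and reports a numerical violation at $\alpha=0.2$. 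So either your dimension-two impossibility claim is wrong, or it is a substantive lemma contradicting the published example --- in either case it cannot be waved through as ``direct inspection,'' and as written it sends you to a $d=3$ construction that you then never carry out.
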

\begin{proof}
Let us fix the basis $\cE=\{\ket{0},\ket{1}\}$ Let $\rho=\ket{\psi}\bra{\psi}$ be a pure state with $|\langle \psi | 0 \rangle|^2=\chi=3/4$ and  $|\langle \psi | 1 \rangle|^2=1-\chi=1/4$.

 For a pure state $\rho$ the entropy is zero, and therefore
 \begin{align*}
 CT_\alpha(\rho)&=S^T_\alpha(\Delta_\alpha(\rho))-S^T_\alpha(\rho)\\
 &=S^T_\alpha(\Delta_\alpha(\rho)) \\
 &=\frac{1}{1-\alpha}\left[\Tr\left\{\Delta_\alpha(\rho))^\alpha \right\}-1  \right]\\
  &=\frac{1}{1-\alpha}\left[\frac{1}{\left(\sum_j\chi_j^{1/\alpha}\right)^\alpha}-1  \right]\\
   &=\frac{1}{1-\alpha}\left[\frac{4}{\left(3^{1/\alpha}+1\right)^\alpha}-1  \right]\ . 
  \end{align*}
  Let $\Lambda$ be GIO,  with Kraus operators $\Lambda(\rho)=K_1 \rho K_1^*+K_2 \rho K_2^*$ where Kraus operators are diagonal in $\cE$ basis
 $$K_1=\begin{pmatrix}
\frac{1}{\sqrt{2}} & 0 \\
0 & \frac{\sqrt{3}}{2} 
\end{pmatrix}\ , \qquad
K_2=\begin{pmatrix}
\frac{1}{\sqrt{2}} & 0 \\
0 & \frac{1}{2}
\end{pmatrix}\ .$$
Clearly $\sum_n K_n^*K_n=I$. Then 
$$\Lambda(\rho)=\begin{pmatrix}
\frac{3}{4} & a \\
a& \frac{1}{4}
\end{pmatrix}\ , $$
where $a=\frac{3+\sqrt{3}}{8\sqrt{2}}$. The eigenvalues of this matrix are
$\beta_{1,2}=\frac{1}{2}\left(1\pm\sqrt{\frac{1}{4}+4a^2}\right)$. And the normalized eigenvector corresponding to $\beta_{1,2}$ are
$$\ket{\psi_{1,2}}=\frac{1}{\sqrt{a^2+(\beta_{1,2}-\frac{3}{4})^2}}\begin{pmatrix}
a  \\
\beta_{1,2}-\frac{3}{4}
\end{pmatrix}\ . $$
Therefore, $\Tr(\Lambda(\rho)^\alpha)=\beta_1^\alpha+\beta_2^\alpha$, and
$$N(\Lambda(\rho))=\sum_j \beta_1|\langle j|\psi_1\rangle|^{2/\alpha}+\beta_2|\langle j|\psi_2\rangle|^{2/\alpha}\ . $$
And the Tsallis coherence is then
$$CT_\alpha(\Lambda(\rho))=\frac{1}{1-\alpha}\left[\frac{1}{N(\Lambda(\rho))^\alpha}-1 \right]\Tr(\Lambda(\rho)^\alpha)\ . $$

\begin{figure}[t!]
\begin{center}
  \includegraphics[width=\linewidth/2]{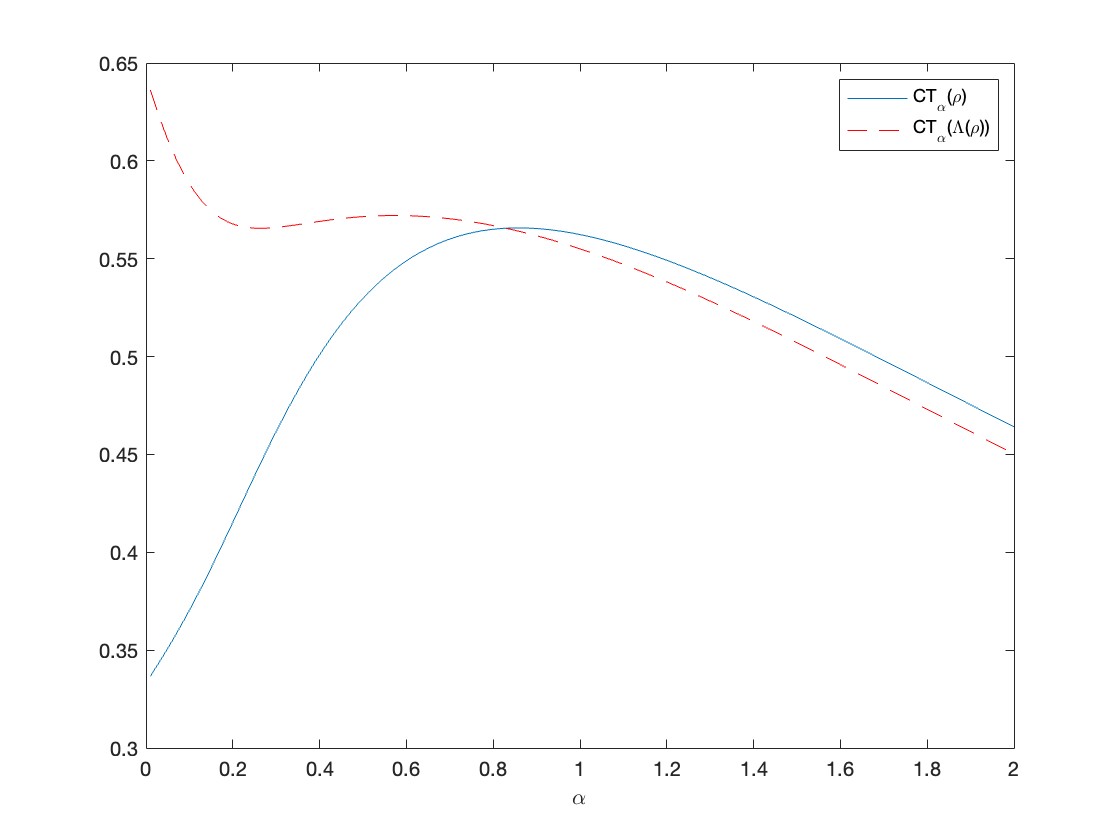}
  \caption{Failure of monotonicity under GIO for small $\alpha$.}
  \label{fig:MonoGIO}
  \end{center}
\end{figure}

From Figure \ref{fig:MonoGIO}, we see that, for example, for $\alpha=0.2$, monotonicity has failed 
$$CT_\alpha(\rho)<0.5<CT_\alpha(\Lambda(\rho))\ . $$
\end{proof}

\begin{definition}
A GIO map $\Lambda$ that commutes with $\Delta_\alpha$ is called {\bf $\alpha$-GIO}. 
\end{definition}
A unitary diagonal under a fixed basis $\cE$ is an $\alpha$-GIO for any $\alpha$. For $\alpha=1$, $\Delta_\alpha(\rho)=\Delta(\rho)$, which commutes with any GIO.

\begin{theorem}
Tsallis coherence is monotone under $\alpha$-GIO.
\end{theorem}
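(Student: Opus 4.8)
The plan is to reduce the statement to the fact that Tsallis entropy cannot decrease under a unital channel, after first observing that an $\alpha$-GIO leaves the closest incoherent state unchanged.

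First I would exploit the two defining properties of an $\alpha$-GIO $\Lambda$ in combination. Since $\Delta_\alpha(\rho)\in\cI$ is incoherent and any GIO acts trivially on every incoherent state (Definition \ref{def:GIO}), we have $\Lambda(\Delta_\alpha(\rho))=\Delta_\alpha(\rho)$. On the other hand, the $\alpha$-GIO hypothesis is precisely that $\Lambda$ commutes with $\Delta_\alpha$, i.e. $\Lambda(\Delta_\alpha(\rho))=\Delta_\alpha(\Lambda(\rho))$. Chaining the two identities yields the key equality
$$\Delta_\alpha(\Lambda(\rho))=\Delta_\alpha(\rho)\ ,$$
so the closest incoherent state is not moved by $\Lambda$.

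With this in hand the difference of coherences collapses. Writing out both definitions and substituting the equality above,
\begin{align*}
CT_\alpha(\rho)-CT_\alpha(\Lambda(\rho))&=\big(S^T_\alpha(\Delta_\alpha(\rho))-S^T_\alpha(\rho)\big)-\big(S^T_\alpha(\Delta_\alpha(\Lambda(\rho)))-S^T_\alpha(\Lambda(\rho))\big)\\
&=S^T_\alpha(\Lambda(\rho))-S^T_\alpha(\rho)\ .
\end{align*}
Thus the desired monotonicity $CT_\alpha(\rho)\geq CT_\alpha(\Lambda(\rho))$ is equivalent to showing that Tsallis entropy does not decrease under $\Lambda$, namely $S^T_\alpha(\Lambda(\rho))\geq S^T_\alpha(\rho)$.

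To establish this last inequality I would first note that $\Lambda$ is unital: its Kraus operators are diagonal in $\cE$, so each $K_n$ commutes with its adjoint and $\Lambda(I)=\sum_n K_nK_n^*=\sum_n K_n^*K_n=I$ by the completeness relation. Then I would invoke the data-processing inequality for the Tsallis relative entropy, valid for all $\alpha\in(0,2]$, applied with second argument $I$. Using $S^T_\alpha(\rho\|I)=-S^T_\alpha(\rho)$ together with $\Lambda(I)=I$,
$$-S^T_\alpha(\Lambda(\rho))=S^T_\alpha(\Lambda(\rho)\|\Lambda(I))\leq S^T_\alpha(\rho\|I)=-S^T_\alpha(\rho)\ ,$$
which rearranges to $S^T_\alpha(\Lambda(\rho))\geq S^T_\alpha(\rho)$ and completes the argument. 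The main point requiring care is the key equality $\Delta_\alpha(\Lambda(\rho))=\Delta_\alpha(\rho)$, on which the entire cancellation rests; after that the proof is essentially the statement that a unital channel raises Tsallis entropy, the only subtlety being that one must use the validity of the Tsallis data-processing inequality across the whole range $\alpha\in(0,2]$ (the prefactor $1/(\alpha-1)$ changes sign at $\alpha=1$, but the inequality stays correctly oriented on both sides).
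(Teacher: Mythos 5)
Your proof is correct and follows essentially the same route as the paper: both reduce the claim to the cancellation $\Delta_\alpha(\Lambda(\rho))=\Lambda(\Delta_\alpha(\rho))=\Delta_\alpha(\rho)$ (commutation plus GIO invariance of incoherent states) and then to the inequality $S^T_\alpha(\Lambda(\rho))\geq S^T_\alpha(\rho)$. Your justification of that last step is in fact more careful than the paper's, which simply asserts monotonicity of Tsallis entropy under CPTP maps, whereas you correctly isolate the unitality of $\Lambda$ (diagonal Kraus operators) and derive the entropy increase from the data-processing inequality applied to $S^T_\alpha(\cdot\|I)$.
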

\begin{proof}
By definition 
$$CT_\alpha(\rho)-CT_\alpha(\Lambda(\rho))=S_\alpha^T(\Lambda(\rho))-S_\alpha^T(\rho)+S_\alpha^T(\Delta_\alpha(\rho))-S_\alpha^T(\Delta_\alpha(\Lambda(\rho)))\ .$$
Since Tsallis entropy is monotone under CPTP maps, $S_\alpha^T(\Lambda(\rho))-S_\alpha^T(\rho)\geq 0$.
$\Lambda$ commutes with $\Delta_\alpha$,  and $\Lambda$ is GIO, so it leaves the incoherent states, such as $\Delta_\alpha(\rho)$, invariant, therefore 
\begin{align*}
S_\alpha^T(\Delta_\alpha(\rho))-S_\alpha^T(\Delta_\alpha(\Lambda(\rho)))&=S_\alpha^T(\Delta_\alpha(\rho))-S_\alpha^T(\Lambda(\Delta_\alpha(\rho)))=0 \ .
\end{align*}
\end{proof}

\subsection{Strong monotonicity.} 
\begin{theorem}
Tsallis coherence $CT_\alpha(\rho)$ reaches equality in strong monotonicity for convex mixtures of diagonal unitaries. Therefore, $CT_\alpha(\rho)$ reaches equality in strong monotonicity under GIO in two- and three-dimensions, when Kraus operators are proportional to diagonal unitaries.
\end{theorem}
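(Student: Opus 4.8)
The plan is to reduce the equality claim directly to the diagonal-unitary invariance already established in Theorem \ref{thm:inv-uni}. A convex mixture of diagonal unitaries is a channel whose Kraus operators have the form $K_n=\sqrt{c_n}\,U_n$, where each $U_n=\sum_j e^{i\phi_j^{(n)}}\ket{j}\bra{j}$ is a unitary diagonal in $\cE$ and $c_n\geq 0$ with $\sum_n c_n=1$ (so that $\sum_n K_n^*K_n=\sum_n c_n\,\idty=\idty$). First I would compute the measurement outcomes for this Kraus family: since $U_n$ is unitary, $p_n=\Tr(K_n\rho K_n^*)=c_n\Tr(U_n\rho U_n^*)=c_n\Tr\rho=c_n$, and hence the post-measurement states collapse to $\rho_n=K_n\rho K_n^*/p_n=U_n\rho U_n^*$.

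Next I would invoke Theorem \ref{thm:inv-uni}, which states that $CT_\alpha$ is invariant under diagonal unitaries, to conclude $CT_\alpha(\rho_n)=CT_\alpha(U_n\rho U_n^*)=CT_\alpha(\rho)$ for every $n$. Substituting into the right-hand side of the strong-monotonicity inequality then gives
\begin{equation*}
\sum_n p_n\,CT_\alpha(\rho_n)=\sum_n c_n\,CT_\alpha(\rho)=CT_\alpha(\rho)\ ,
\end{equation*}
which is exactly equality in (C3). This establishes the first assertion in full generality, i.e.\ in any dimension.

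For the second assertion I would observe that a GIO whose Kraus operators are each proportional to a diagonal unitary is, by definition, nothing other than a convex mixture of diagonal unitaries: diagonality of the $K_n$ (the defining feature of GIO) together with proportionality to a unitary forces $K_n=\sqrt{c_n}\,U_n$ with $U_n$ diagonal unitary, and the completeness relation forces $\sum_n c_n=1$. The equality in strong monotonicity then follows from the first part. The restriction to two and three dimensions is inherited from the parallel statement for $C_f$ recorded in item (a) above (from \cite{V22}): it is precisely in dimensions $d=2$ and $d=3$ that the strong-monotonicity analysis of GIO reduces to the convex-mixture-of-diagonal-unitaries case, so that the hypothesis on the Kraus operators is the natural one to impose there.

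The core argument carries essentially no obstacle, as it is an immediate consequence of Theorem \ref{thm:inv-uni} once the outcomes $p_n$ and $\rho_n$ are identified. The only delicate point is conceptual rather than computational: articulating why the dimensional restriction enters the corollary. This is not proved afresh here but is imported from the corresponding $C_f$ result of \cite{V22}, where the low-dimensional structure of GIO is what guarantees that the equality-achieving convex mixtures of diagonal unitaries exhaust the relevant GIO behaviour.
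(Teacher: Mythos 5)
Your proposal is correct and follows essentially the same route as the paper: both reduce the claim to the diagonal-unitary invariance of Theorem \ref{thm:inv-uni} and then sum $\sum_n p_n CT_\alpha(\rho_n)=(\sum_n p_n)CT_\alpha(\rho)=CT_\alpha(\rho)$; your version is slightly more explicit in identifying $p_n=c_n$ and $\rho_n=U_n\rho U_n^*$ from the Kraus form $K_n=\sqrt{c_n}\,U_n$, which the paper leaves implicit.
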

\begin{proof}
Consider a GIO $\Lambda$ that is a probabilistic mixture of diagonal unitaries, i.e. let  
$$\Lambda(\rho)=\sum_k \alpha_k U_k\rho U_k^*\ ,$$ 
where $\alpha_j\in[0,1]$ with $\sum \alpha_k=1$, and the unitaries $U_k$ are diagonal in $\cE$. Then from Theorem \ref{thm:inv-uni}, since $CT_\alpha$ is invariant under diagonal unitaries, we have
\begin{align*}
\sum_k \alpha_kCT_\alpha(U_k\rho U_k^*)=\left(\sum_k \alpha_k \right)CT_\alpha(\rho)=CT_\alpha(\rho)\ .
\end{align*}

\end{proof}

In general, $CT_\alpha$ fails strong monotonicity for IO maps.
\begin{theorem}
Tsallis coherence $CT_\alpha(\rho)$ fails strong monotonicity under IO maps.
\end{theorem}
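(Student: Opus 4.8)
The plan is to disprove strong monotonicity by exhibiting a single IO map and a single input state that violate condition (C3), i.e.\ by arranging $\sum_n p_n CT_\alpha(\rho_n) > CT_\alpha(\rho)$. Since IO is a strictly larger class than GIO and we already know from Theorem \ref{thm:monoGIO} that even non-selective GIO monotonicity fails, a counterexample should be within reach; the point is that the selective condition (C3) is a genuinely different inequality from the non-selective one, so a fresh explicit construction is needed rather than a direct appeal to the GIO example.

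First I would work in dimension two with the fixed basis $\cE=\{\ket{0},\ket{1}\}$ and take the input to be a pure state $\rho=\ket{\psi}\bra{\psi}$. The advantage is that for a pure input every post-measurement state $\rho_n=K_n\ket{\psi}\bra{\psi}K_n^*/p_n$ is again pure, so by the pure-state formula derived above each term $CT_\alpha(\rho_n)$ equals $S_\alpha^T(\Delta_\alpha(\rho_n))$ and is given in closed form by $\frac{1}{1-\alpha}\big[(\sum_j|\langle\phi_n|j\rangle|^{2/\alpha})^{-\alpha}-1\big]$, where $\ket{\phi_n}=K_n\ket{\psi}/\norm{K_n\ket{\psi}}$. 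This reduces the whole inequality to an explicit function of the Kraus parameters and of $\alpha$.

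Next I would choose Kraus operators that are genuinely incoherent (each column having at most one nonzero entry, so that $K_n\cI K_n^*\subset\cI$) but not of GIO type: concretely one diagonal operator together with one anti-diagonal (basis-swapping) operator, normalized so that $\sum_n K_n^*K_n=\idty$. The anti-diagonal component is exactly what takes us outside GIO, and, because the map $\rho\mapsto\Delta_\alpha(\rho)$ is nonlinear, it is what allows the weighted average of the conditional coherences to exceed the coherence of $\rho$. With $\rho$, the parameters of $K_1,K_2$, and $\alpha$ all explicit, I would then compute $p_1,p_2$ and the two conditional coherences and exhibit a regime — as in Theorem \ref{thm:monoGIO}, likely for small $\alpha$ — in which $CT_\alpha(\rho)<p_1 CT_\alpha(\rho_1)+p_2 CT_\alpha(\rho_2)$, displaying the violation with a plot over $\alpha$.

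The main obstacle is verifying that the chosen Kraus operators genuinely define an incoherent operation while still producing a strict increase: reset-type incoherent operators only decrease the selective average, and purely diagonal (GIO) operators give equality or merely the weaker non-selective failure, so the swap component must be tuned together with $\rho$ so that one branch concentrates weight onto a more coherent pure state without spoiling the completeness relation or the incoherence of the $K_n$. Once a suitable pair $(\rho,\{K_n\})$ is fixed, confirming the strict inequality reduces to a one-variable numerical check in $\alpha$.
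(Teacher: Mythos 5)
Your strategy is sound and would yield a valid counterexample, but it takes a different route from the paper and rests in part on a false premise. The paper reuses the three-dimensional example of \cite{SL17}: a \emph{mixed} state $\rho$ on $\mathbb{C}^3$ and two permutation-type incoherent Kraus operators, so that one branch $\rho_1$ is incoherent and the other branch $\rho_2$ is pure; the violation $CT_\alpha(\rho)<p_2CT_\alpha(\rho_2)$ is then exhibited numerically for $b=0.9$, $\alpha\approx 0.211$. You instead propose a qubit with a pure input and one diagonal plus one anti-diagonal Kraus operator. That works, but note that the anti-diagonal (swap) component is doing nothing for you: writing $K_2=X\,\mathrm{diag}(d_0,d_1)$ with $X$ the basis swap, the branch probability $p_2$ is unchanged and the populations of $\rho_2$ are merely permuted, and $CT_\alpha$ on pure states is a symmetric function of the populations, so your $\sum_n p_nCT_\alpha(\rho_n)$ is numerically identical to that of the purely diagonal (GIO) instrument $\{\mathrm{diag}(c_0,c_1),\mathrm{diag}(d_0,d_1)\}$. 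Consequently your claim that ``purely diagonal (GIO) operators give equality or merely the weaker non-selective failure'' is incorrect --- Theorem \ref{thm:failed-strong} of the paper shows exactly such diagonal, non-unitary GIO Kraus operators violating strong monotonicity on a pure qubit state at $\alpha\approx 0.203$ --- and it is that diagonal branching, not the swap, that produces the violation in your construction. So your proof goes through (and, since GIO $\subset$ IO, essentially reduces to the GIO case), but the mechanism you identify as essential is not the operative one; like the paper, you ultimately rely on a numerical check in $\alpha$, which does succeed for small $\alpha$.
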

\begin{proof}
We use example from \cite{SL17}, which was used to show that $CR^1_\alpha$ fails strong monotonicity under IO maps. Consider a three-dimensional space spanned by standard orthonormal basis $\cE=\{\ket{0}, \ket{1}, \ket{2}\}$. Let the density matrix be
$$\rho=\frac{1}{4}\begin{pmatrix}
1 & 0 & 1\\
0 & 2 & 0\\
1 & 0 & 1\\
\end{pmatrix}\ .$$
Let the Kraus operators of the IO map be 
$$K_1=\begin{pmatrix}
0 & 1 & 0\\
0 & 0 & 0\\
0 & 0 & a\\
\end{pmatrix}\ , \qquad
K_2=\begin{pmatrix}
1 & 0 & 0\\
0 & 0 & b\\
0 & 0 & 0\\
\end{pmatrix}\ .
$$
Here $|a|^2+|b|^2=1$ to satisfy the condition $K_1^*K_1+K_2^*K_2=I$.
It is straightforward to check that these Kraus operators leave the space of incoherent states $\cI$ invariant.
The output states are
$$ \rho_1=\frac{1}{p_1}K_1\rho K_1^*=\frac{1}{2+|a|^2}\begin{pmatrix}
2 & 0 & 0\\
0 & 0 & 0\\
0 & 0 & |a|^2\\
\end{pmatrix}\ , \qquad
\rho_2=\frac{1}{p_2}K_2\rho K_2^*=\frac{1}{1+|b|^2}\begin{pmatrix}
1 & b^* & 0\\
b & |b|^2 & 0\\
0 & 0 & 0\\
\end{pmatrix}\ ,$$
where $p_1=\frac{2+|a|^2}{4}$ and $p_2=\frac{1+|b|^2}{4}$.
Notice that $\rho_1\in\cI$ is diagonal and therefore incoherent, and $\rho_2=\ket{\psi}\bra{\psi}$ is the pure state with $\ket{\psi}=\frac{1}{\sqrt{1+|b|^2}}(\ket{0}+b\ket{1})$.

The $\alpha$ power of $\rho$ is the state
$$\rho^\alpha=\frac{1}{2^{1+\alpha}}\begin{pmatrix}
1 & 0 & 1\\
0 & 2 & 0\\
1 & 0 & 1\\
\end{pmatrix}\ .$$
And therefore the Tsallis coherence is
$$CT_\alpha(\rho)=S_\alpha^T(\Delta_\alpha(\rho))-S_\alpha^T(\rho)=\frac{4}{1-\alpha}\left[(2+2^{1/\alpha})^{-\alpha}-2^{-(1+\alpha)} \right]\ . $$
Since $\rho\in\cI$ is incoherent, $CT_\alpha(\rho_1)=0$. And since $\rho_2$ is a pure state, the Tsallis coherence is
$$p_2CT_\alpha(\rho_2)=p_2S_\alpha^T(\Delta_\alpha(\rho_2))=\frac{1}{1-\alpha}\frac{1+|b|^2}{4}\left[(1+|b|^2)(1+|b|^{2/\alpha})^{-\alpha}-1 \right]\ . $$

\begin{figure}[t!]
\begin{center}
  \includegraphics[width=\linewidth/2]{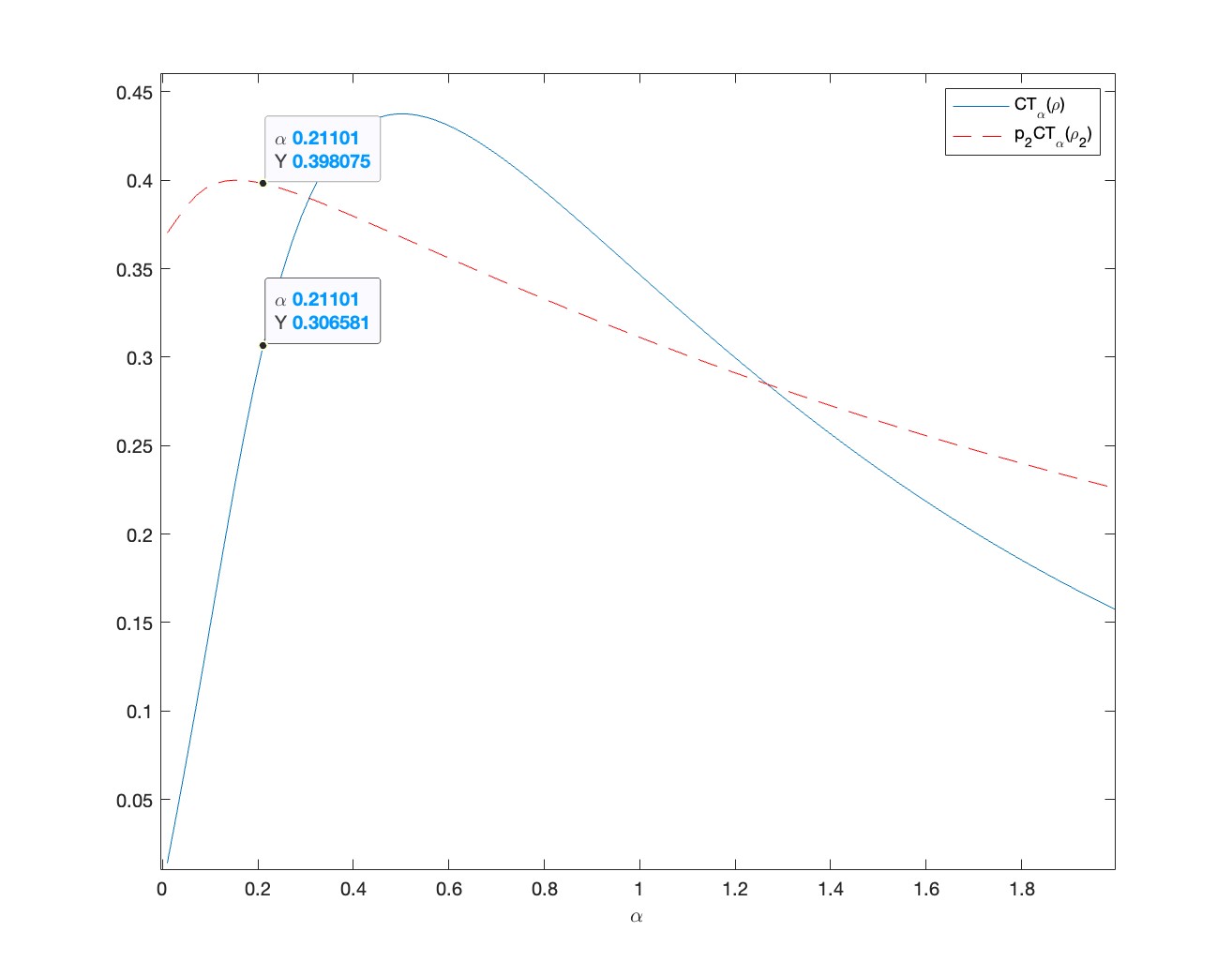}
  \caption{Failure of strong monotonicity under IO.}
  \label{fig:StrongMonoIO}
  \end{center}
\end{figure}

From Figure \ref{fig:StrongMonoIO} we have, for example, for $b=0.9$ and $\alpha=0.21101$, we have $$CT_\alpha(\rho)<0.35<p_2CT_\alpha(\rho_2)=\sum_j p_j CT_\alpha(\rho_j)\ .$$ 
\end{proof}

For strong monotonicity property it is important how the quantum channel is written in terms of its Kraus operators. We showed that in 2- or 3-dimensions, if GIO is written as a convex mixture of diagonal unitaries, then Tsallis coherence reaches equality. However, if GIO is written in some other way, we show that Tsallis coherence may fail strong monotonicity.
\begin{theorem}\label{thm:failed-strong}
Tsallis coherence fails strong monotonicity under GIO, even on pure states, if Kraus operators are not  proportional to unitaries.
\end{theorem}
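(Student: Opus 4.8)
The plan is to produce an explicit counterexample in dimension two, in the same spirit as Theorem \ref{thm:monoGIO} and the preceding strong-monotonicity theorem. Fix the basis $\cE=\{\ket{0},\ket{1}\}$ and take a pure state $\rho=\ket{\psi}\bra{\psi}$ with $\ket{\psi}=\sqrt{\chi}\,\ket{0}+\sqrt{1-\chi}\,\ket{1}$. Choose a GIO $\Lambda(\rho)=K_1\rho K_1^*+K_2\rho K_2^*$ with diagonal Kraus operators $K_n=\mathrm{diag}(a_n,b_n)$ satisfying $|a_1|^2+|a_2|^2=|b_1|^2+|b_2|^2=1$ and, crucially, chosen so that no $K_n$ is proportional to a unitary, i.e. $|a_n|\neq|b_n|$ for some $n$ (this is precisely the regime excluded from the equality result). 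Since each $K_n$ is diagonal and $\rho$ is pure, the post-measurement states $\rho_n=K_n\rho K_n^*/p_n$ are again pure, so the pure-state formula $CT_\alpha(\ket{\phi}\bra{\phi})=\frac{1}{1-\alpha}[(\sum_j|\langle\phi|j\rangle|^{2/\alpha})^{-\alpha}-1]$ applies uniformly to $\rho$ and to both $\rho_n$.

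Next I would make the three coherences explicit. Writing $x_0=\chi$, $x_1=1-\chi$ and $\kappa_{n,j}=|(K_n)_{jj}|^2$, the outcome probabilities are $p_n=\kappa_{n,0}x_0+\kappa_{n,1}x_1$, and the squared moduli of the normalized amplitudes of $\ket{\psi_n}$ are $\kappa_{n,j}x_j/p_n$. Substituting into the pure-state formula and using $\sum_n\kappa_{n,j}=1$, the strong-monotonicity defect reduces, for $0<\alpha<1$, to testing whether
\[
\Big(\sum_j x_j^{1/\alpha}\Big)^{-\alpha}\ \geq\ \sum_n p_n^{\,2}\Big(\sum_j(\kappa_{n,j}x_j)^{1/\alpha}\Big)^{-\alpha}
\]
can be reversed; since $\frac{1}{1-\alpha}>0$ here, a reversal is exactly the failure of $CT_\alpha(\rho)\geq\sum_n p_n CT_\alpha(\rho_n)$. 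A useful sanity check is that when $|a_n|=|b_n|$ (so $\kappa_{n,j}$ is independent of $j$ and $K_n$ is a scalar multiple of a diagonal unitary) both sides collapse to $(\sum_j x_j^{1/\alpha})^{-\alpha}$ after using completeness, recovering the equality of the preceding theorem.

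I would then exhibit concrete numbers: the asymmetric Kraus operators already used in Theorem \ref{thm:monoGIO} (or any pair with $|a_1|\neq|b_1|$), together with a suitably skewed $\chi$ and a small $\alpha$ in the range that already defeated plain monotonicity. As in the earlier theorems, the cleanest certificate of a strict violation is a plot of $CT_\alpha(\rho)-\sum_n p_n CT_\alpha(\rho_n)$ against $\alpha$ (and/or against a Kraus parameter), reading off an interval where it is negative.

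The main obstacle is that the defect has no tractable closed form, and the two effects pull in opposite directions: equality is forced whenever the Kraus operators are proportional to diagonal unitaries, so the violation must be driven entirely by the magnitude asymmetry $|a_n|\neq|b_n|$ interacting with the strong convexity of $x\mapsto x^{1/\alpha}$ for small $\alpha$. Locating a parameter window where this asymmetry overcomes the averaging, and confirming the sign of the defect, is the delicate step, and in practice it is settled by the explicit numerical evaluation, which is why a figure is the natural device here.
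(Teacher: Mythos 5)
Your proposal is correct and follows essentially the same route as the paper: the paper reuses the two-dimensional pure state with $\chi=3/4$ and the diagonal, non-unitary-proportional Kraus operators from Theorem \ref{thm:monoGIO}, computes $CT_\alpha(\rho)$, $CT_\alpha(\rho_1)$, $CT_\alpha(\rho_2)$ explicitly via the pure-state formula (your reduction to $(\sum_j x_j^{1/\alpha})^{-\alpha}$ versus $\sum_n p_n^2(\sum_j(\kappa_{n,j}x_j)^{1/\alpha})^{-\alpha}$ is exactly this calculation, and your equality sanity check matches the preceding theorem), and certifies the violation numerically via a figure at small $\alpha$ (e.g.\ $\alpha=0.20303$). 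Your only omission is the final concrete numerical evaluation, but the paper likewise settles that step by a plot rather than analytically.
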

\begin{proof}
We are going to use the same example as in Theorem \ref{thm:monoGIO}. Let us fix the basis $\cE=\{\ket{0},\ket{1}\}$ Let $\rho=\ket{\psi}\bra{\psi}$ be a pure state with $|\langle \psi | 0 \rangle|^2=\chi=3/4$ and  $|\langle \psi | 1 \rangle|^2=1-\chi=1/4$.

 For a pure state $\rho$ the entropy is zero, and therefore
 \begin{align*}
 CT_\alpha(\rho)&=S^T_\alpha(\Delta_\alpha(\rho)) \\
 &=\frac{1}{1-\alpha}\left[\Tr\left\{\Delta_\alpha(\rho))^\alpha \right\}-1  \right]\\
  &=\frac{1}{1-\alpha}\left[\frac{1}{\left(\sum_j\chi_j^{1/\alpha}\right)^\alpha}-1  \right]\\
   &=\frac{1}{1-\alpha}\left[\frac{4}{\left(3^{1/\alpha}+1\right)^\alpha}-1  \right]\ . 
  \end{align*}

 Let $\Lambda$ be GIO,  with Kraus operators $\Lambda(\rho)=K_1 \rho K_1^*+K_2 \rho K_2^*$ where Kraus operators are diagonal in $\cE$ basis
 $$K_1=\begin{pmatrix}
\frac{1}{\sqrt{2}} & 0 \\
0 & \frac{\sqrt{3}}{2} 
\end{pmatrix}\ , \qquad
K_2=\begin{pmatrix}
\frac{1}{\sqrt{2}} & 0 \\
0 & \frac{1}{2}
\end{pmatrix}\ .$$
Clearly $\sum_n K_n^*K_n=I$. Then the post-measurement states $\rho_n=\frac{1}{p_n}K_n\rho K_n^*=\ket{\psi_n}\bra{\psi_n}$ are also pure, where $\ket{\psi_n}=\frac{1}{\sqrt{p_n}}K_n\ket{\psi}$ and $p_n=\bra{\psi}K_n^*K_n\ket{\psi}$.  Let us denote  $|\langle \psi_n | j \rangle|^2=\xi_{nj}=\frac{1}{p_n}|\bra{j}K_n\ket{\psi}|^2=\frac{1}{p_n}|k_{nj}|^2\chi_j$, and $p_n=\sum_j |k_{nj}|^2\chi_j$. Then $p_1=\frac{9}{16}$ and $p_2=\frac{7}{16}$, and 
$$\xi_{11}=\frac{2}{3}\ , \  \xi_{12}=\frac{1}{3}\, \qquad \xi_{21}=\frac{6}{7}\ , \ \xi_{22}=\frac{1}{7}\ .$$
Therefore,
 \begin{align*}
  CT_\alpha(\rho_1)&=S^T_\alpha(\Delta_\alpha(\rho_1))\\
 &=\frac{1}{1-\alpha}\left[\Tr\left\{\Delta_\alpha(\rho_1))^\alpha \right\}-1  \right]\\
  &=\frac{1}{1-\alpha}\left[\frac{1}{\left(\sum_j\xi_{1j}^{1/\alpha}\right)^\alpha}-1  \right]\\
  &=\frac{1}{1-\alpha}\left[\frac{3}{\left(2^{1/\alpha}+1\right)^\alpha}-1  \right]\ .\\
 \end{align*}
 Similarly, 
\begin{align*}
  CT_\alpha(\rho_2)&=S^T_\alpha(\Delta_\alpha(\rho_2))\\
 &=\frac{1}{1-\alpha}\left[\Tr\left\{\Delta_\alpha(\rho_2))^\alpha \right\}-1  \right]\\
  &=\frac{1}{1-\alpha}\left[\frac{1}{\left(\sum_j\xi_{2j}^{1/\alpha}\right)^\alpha}-1  \right]\\
  &=\frac{1}{1-\alpha}\left[\frac{7}{\left(6^{1/\alpha}+1\right)^\alpha}-1  \right]\ .\\
 \end{align*}
 
\begin{figure}[t!]
\begin{center}
  \includegraphics[width=\linewidth/2]{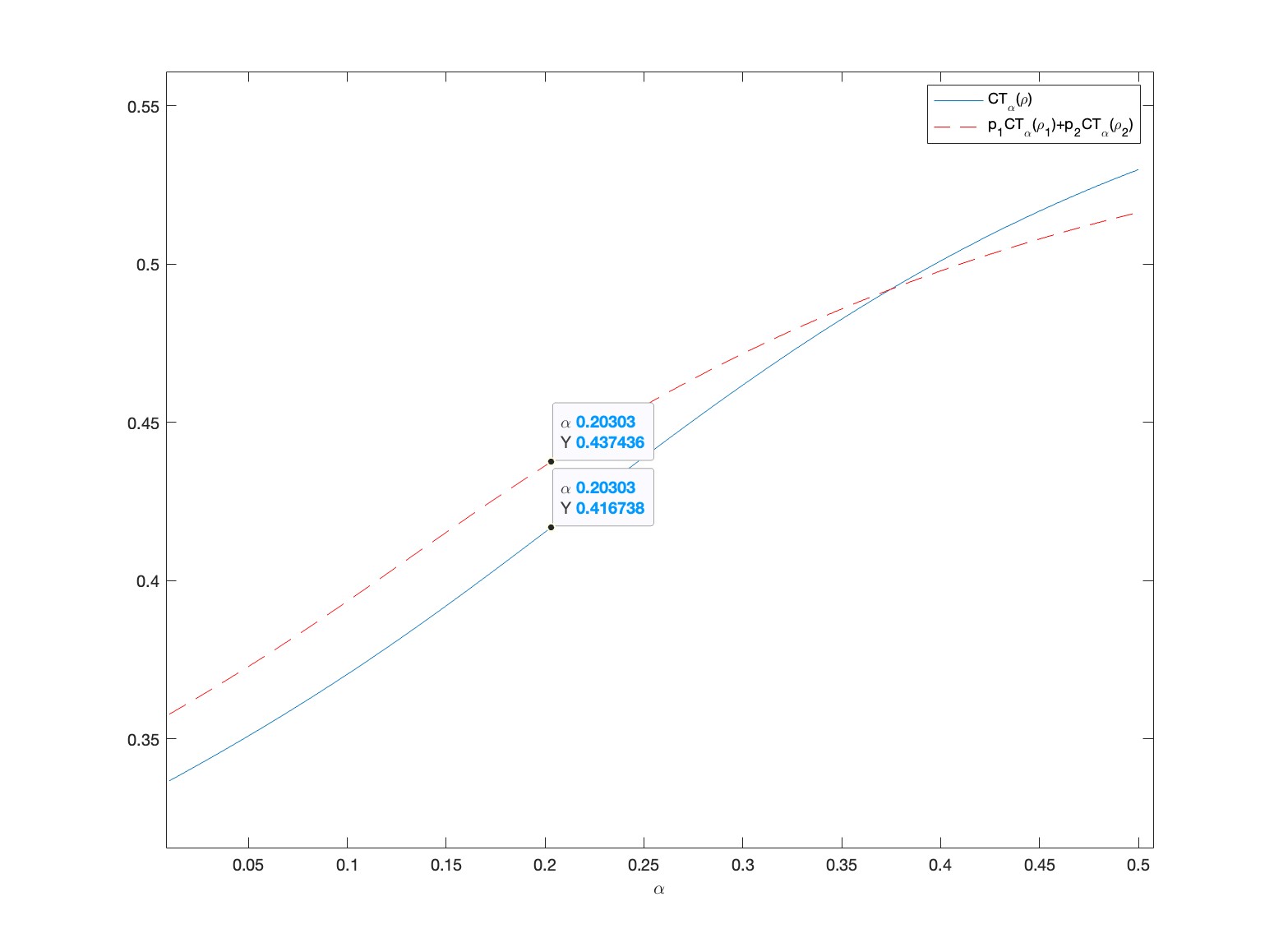}
  \caption{Failure of strong monotonicity under GIO for small $\alpha$.}
  \label{fig:StrongMonoGIO}
  \end{center}
\end{figure}

From Figure \ref{fig:StrongMonoGIO} we have, for  example, for $\alpha=0.20303$, strong monotonicity fails since
$$CT_\alpha(\rho)<0.42<p_1CT_\alpha(\rho_1)+p_2CT_\alpha(\rho_2) \ .$$
\end{proof}

\section{Improved $\alpha$-coherence measure}

Note that even though $\Delta_1=\Delta$, these two operators scale differently, in the following sense: $\Delta(p\rho)=p\Delta(\rho)$, and $\Delta_\alpha(p\rho)=\Delta(\rho)$. For this reason, define the ``unnormalized" $\Delta_\alpha$,
\begin{equation}\label{eq-delta-unnorm}
\tilde{\Delta}_\alpha(\rho)=\sum_j \bra{j}\rho^\alpha\ket{j}^{1/\alpha}\ket{j}\bra{j}\ .
\end{equation}
Note that $\tilde\Delta_\alpha(\rho)=\Delta(\rho^\alpha)^{1/\alpha}$.

 In \cite{DH21}, a coherence measure was proposed
 \begin{equation}\label{eq-impr-1}
\Tr\left|\Delta(\rho)^\alpha-\rho^\alpha \right|^{1/\alpha} \ ,
\end{equation}
 which was shown to be satisfy (C5). Since (C5) is equivalent to (C3) and (C4), and the later two imply (C2), satisfying (C5) implies that the expression is a coherence measure.

 Similarly to this, we propose the following coherence measures
 \begin{equation}\label{eq-impr-1}
C^1_\alpha(\rho)=\Tr\left|\tilde\Delta_\alpha(\rho)-\rho \right|=\Tr\left|\Delta(\rho^\alpha)^{1/\alpha}-\rho \right| \ ,
\end{equation}
and
\begin{equation}\label{eq-impr-1}
C^2_\alpha(\rho)=\Tr\left|\tilde\Delta_\alpha(\rho)^\alpha-\rho^\alpha \right|^\frac{1}{\alpha} =\Tr\left|\Delta(\rho^\alpha)-\rho^\alpha \right|^\frac{1}{\alpha} \ .
\end{equation}
Both, $C_\alpha^1$ and $C_\alpha^2$, can be easily shown to satisfy (C5): for $p_1+p_2=1$, $p_1, p_2\geq 0$ and any two states $\rho_1$ and $\rho_2$,  $$\cC(p_1\rho_1\oplus p_2\rho_2)=p_1\cC(\rho_1)+p_2\cC(\rho_2)\ .$$





\vspace{0.3in}
\textbf{Acknowledgments.}  A. V. is supported by NSF grant DMS-2105583.
\vspace{0.3in}

Data sharing not applicable to this article as no datasets were generated or analyzed during the current study.

The author has no competing interests or conflict of interest to declare that are relevant to the content of this article.


\begin{thebibliography}{10}

\bibitem{A14} Aberg, J. (2014). Catalytic coherence. Phys. Rev. Lett. 113, 150402 



\bibitem{BC14} Baumgratz, T., Cramer, M., Plenio, M. B. (2014). Quantifying coherence. Physical review letters, 113(14), 140401






\bibitem{CG16} Chitambar, E., Gour, G. (2016). Critical examination of incoherent operations and a physically consistent resource theory of quantum coherence. Physical review letters, 117(3), 030401

\bibitem{CG16-2} Chitambar, E., Gour, G. (2016). Comparison of incoherent operations and measures of coherence. Physical Review A, 94(5), 052336.

\bibitem{CH16} Chitambar, E.,  Hsieh, M. H. (2016). Relating the resource theories of entanglement and quantum coherence. Physical review letters, 117(2), 020402



\bibitem{C15} Cwikliski, P., Studziski, M., Horodecki, M., Oppenheim, J. (2015). Limitations on the evolution of quantum coherences: towards fully quantum second laws of thermodynamics. Physical review letters, 115(21), 210403

\bibitem{DH21} Dai, Y., Hu, J., Zhang, Z., Zhang, C., Dong, Y., Wang, X. (2021). Measurement-induced entropy increment for quantifying genuine coherence. Quantum Information Processing, 20(8), 1-12.

\bibitem{DS16} De Vicente, J. I., Streltsov, A. (2016). Genuine quantum coherence. Journal of Physics A: Mathematical and Theoretical, 50(4), 045301.


\bibitem{DBG15} Du, S., Bai, Z., Guo, Y. (2015). Conditions for coherence transformations under incoherent operations. Physical Review A, 91(5), 052120



\bibitem{G63} Glauber, R. J. (1963). Coherent and incoherent states of the radiation field. Physical Review, 131(6), 2766

\bibitem{HLP29} Hardy, G. H., J. E. Littlewood, and G. P\'{o}lya (1929). Some simple
inequalities satisfied by convex function. Messenger Math. 58, 145?
152.


\bibitem{HM17} Hiai, F.,  Mosonyi, M. (2017). Different quantum $f$-divergences and the reversibility of quantum operations. Reviews in Mathematical Physics, 29(07), 1750023.

\bibitem{HH18} Hu, M. L., Hu, X., Wang, J., Peng, Y., Zhang, Y. R., Fan, H. (2018). Quantum coherence and geometric quantum discord. Physics Reports, 762, 1-100


\bibitem{KA16} Kammerlander, P., Anders, J. (2016). Coherence and measurement in quantum thermodynamics. Scientific reports, 6(1), 1-7.



\bibitem{L15} Lostaglio, M., Jennings, D.  Rudolph, T. Description of quantum coherence in thermodynamic processes requires constraints beyond free energy. Nat. Commun. 6, 6383 (2015)

\bibitem{MY16} Ma, J., Yadin, B., Girolami, D., Vedral, V.,  Gu, M. (2016). Converting coherence to quantum correlations. Physical review letters, 116(16), 160407


\bibitem{MOA10} Marshall, A. W., Olkin, I.,  Arnold, B. C. (2010). Doubly Stochastic Matrices. In Inequalities: Theory of Majorization and Its Applications (pp. 29-77). Springer, New York, NY.






\bibitem{PC16} Piani, M., Cianciaruso, M., Bromley, T. R., Napoli, C., Johnston, N., Adesso, G. (2016). Robustness of asymmetry and coherence of quantum states. Physical Review A, 93(4), 042107

\bibitem{Pin21} Pinelis, I. (2021). Modulus of continuity of the quantum $ f $-entropy with respect to the trace distance. arXiv preprint arXiv:2107.10112.

\bibitem{RPJ16} Radhakrishnan, C., Parthasarathy, M., Jambulingam, S., Byrnes, T. (2016). Distribution of quantum coherence in multipartite systems. Physical review letters, 116(15), 150504

\bibitem{RPL16} Rana, S., Parashar, P., Lewenstein, M. (2016). Trace-distance measure of coherence. Physical Review A, 93(1), 012110

\bibitem{R16} Rastegin, A. E. (2016). Quantum-coherence quantifiers based on the Tsallis relative $\alpha$ entropies. Physical Review A, 93(3), 032136



\bibitem{RM09} Rebentrost, P., Mohseni, M., Aspuru-Guzik, A. (2009). Role of quantum coherence and environmental fluctuations in chromophoric energy transport. J. Phys. Chem. B 113, 9942 

\bibitem{S23} Schur, I. (1923). Uber eine Klasse von Mittelbildungen mit Anwendungen auf die Determinantentheorie. Sitzungsberichte der Berliner Mathematischen Gesellschaft, 22(9-20), 51.

\bibitem{SZ97} Scully, M. O., Zubairy, M. S. (1997). Quantum Optics (Cambridge.) Ch, 4, 17

\bibitem{SX15} Shao, L. H., Xi, Z., Fan, H., Li, Y. (2015). Fidelity and trace-norm distances for quantifying coherence. Physical Review A, 91(4), 042120

\bibitem{SL17} Shao, L. H., Li, Y. M., Luo, Y.,  Xi, Z. J. (2017). Quantum Coherence quantifiers based on R\'{e}nyi $\alpha$-relative entropy. Communications in Theoretical Physics, 67(6), 631.

\bibitem{SAP17} Streltsov, A., Adesso, G., Plenio, M. B. (2017). Colloquium: Quantum coherence as a resource. Reviews of Modern Physics, 89(4), 041003.


\bibitem{SS15} Streltsov, A., Singh, U., Dhar, H. S., Bera, M. N., Adesso, G. (2015). Measuring quantum coherence with entanglement. Physical review letters, 115(2), 020403

\bibitem{TK16} Tan, K. C., Kwon, H., Park, C. Y.,  Jeong, H. (2016). Unified view of quantum correlations and quantum coherence. Physical Review A, 94(2), 022329


\bibitem{V22} Vershynina, A. (2022). Measure of genuine coherence based of quasi-relative entropy. Quantum Information Processing, 21(5), 1-22.

\bibitem{V16} Virosztek, D. (2016). Quantum entropies, relative entropies, and related preserver problems.

\bibitem{WY16} Winter, A., Yang, D. (2016). Operational resource theory of coherence. Physical review letters, 116(12), 120404

\bibitem{WM13} Witt, B., Mintert, F. (2013). Stationary quantum coherence and transport in disordered networks. New J. Phys. 15, 093020 

\bibitem{YMGGV16} Yadin, B., Ma, J., Girolami, D., Gu, M.,  Vedral, V. (2016). Quantum processes which do not use coherence. Physical Review X, 6(4), 041028.

\bibitem{YXGS15} Yao, Y., Xiao, X., Ge, L., Sun, C. P. (2015). Quantum coherence in multipartite systems. Physical Review A, 92(2), 022112.

\bibitem{YZ16} Yu, X. D., Zhang, D. J., Xu, G. F., Tong, D. M. (2016). Alternative framework for quantifying coherence. Physical Review A, 94(6), 060302.

\bibitem{ZY18} Zhao, H., Yu, C. S. (2018). Coherence measure in terms of the Tsallis relative $\alpha$ entropy. Scientific reports, 8(1), 299

\bibitem{Zetal17} Zhu, H., Ma, Z., Cao, Z., Fei, S. M., Vedral, V. (2017). Operational one-to-one mapping between coherence and entanglement measures. Physical Review A, 96(3), 032316






\end{thebibliography}
\end{document}